 \newcommand{\bs}{\bigskip}
 \newcommand{\ms}{\medskip}
 \newcommand{\n}{\noindent}
 \newcommand{\s}{\smallskip}
 \newcommand{\hs}[1]{\hspace*{ #1 mm}}
 \newcommand{\vs}[1]{\vspace*{ #1 mm}}
 \newcommand{\nat}{\mathbb{N}}
 \newcommand{\prob}{{\mathrm{Prob}}}
 \newcommand{\ie}{\textrm{i.e.},\hspace*{2mm}}
 \newcommand{\eg}{\textrm{e.g.},\hspace*{2mm}}
 \newcommand{\etalc}{\textrm{et al.}}
 \newcommand{\AAA}{{\cal A}}
 \newcommand{\BB}{{\cal B}}
 \newcommand{\HH}{{\cal H}}
 \newcommand{\UU}{{\cal U}}
 \newcommand{\np}{\mathrm{NP}}
 \def\bbox{\vrule height6pt width6pt depth1pt}
\theoremstyle{plain}
 \newtheorem{theorem}{Theorem}[section]
 \newtheorem{lemma}[theorem]{Lemma}
 \newtheorem{corollary}[theorem]{Corollary}
  \newtheorem{definition}[theorem]{Definition}}
 \newtheorem{claim}{Claim}
 \newenvironment{proof}{\par \noindent
            {\bf Proof. \hs{2}}}{\hfill$\Box$ \vspace*{3mm}}
 \newenvironment{proofof}[1]{\vspace*{5mm} \par \noindent
         {\bf Proof of #1.\hs{2}}}{\hfill$\Box$ \vspace*{3mm}}
 \newcommand{\ceilings}[1]{\lceil #1 \rceil}
 \newcommand{\qubit}[1]{| #1 \rangle}
 \newcommand{\bra}[1]{\langle #1 |}
 \newcommand{\ket}[1]{| #1 \rangle}
 \newcommand{\measure}[2]{\langle #1 | #2 \rangle}
 \newcommand{\density}[2]{| #1 \rangle\!\langle #2 |}
 \newcommand{\trace}{\mathrm{tr}}
\newcommand{\ignore}[1]{}
\begin{document}
%%%%%%%%%%%%%%%%%%
%%%%%%%%%%%%%%%%%%
\pagestyle{plain}
\pagenumbering{arabic}
\setcounter{page}{1}
\begin{center}
{\Large A Non-Interactive Quantum Bit Commitment Scheme that Exploits the
Computational Hardness of Quantum State Distinction} \bs\\
{\sc Tomoyuki Yamakami}\footnote{Present Affiliation: Department of Information Science, University of Fukui, 3-9-1 Bunkyo, Fukui 910-8507, Japan} \ms\\
\end{center}

%%%%%%%%%%%%%%%%%%
\begin{quote}
{\small
\n{\bf Abstract:}
We propose an efficient quantum protocol performing quantum bit commitment, which is a simple cryptographic primitive involved with two parties, called a committer and a verifier.
Our protocol is non-interactive, uses no supplemental shared information, and achieves computational concealing and statistical binding under a natural complexity-theoretical assumption. An earlier protocol in the literature relies on the existence of an efficient quantum one-way function. Our protocol, on the contrary, exploits a seemingly weaker assumption on  computational difficulty of distinguishing two specific ensembles of reduced quantum states. This assumption is guaranteed by, for example,
computational hardness of solving the graph automorphism problem efficiently on a quantum computer.

\s
\n{\bf Keywords.} quantum bit commitment, quantum computation, distinction problem, graph automorphism, computational concealing, statistical binding
}
\end{quote}

%%%%%%%%%%%%%%%%%
%%%%%%%%%%%%%%%%%
\section{Introduction}

{\em Bit commitment} is a fundamental cryptographic primitive between two parties and its schemes have been applied to build other useful cryptographic protocols, including secure coin flipping, zero-knowledge proofs, secure multiparty computation, signature schemes, and secret sharing. A protocol for the bit commitment demands the following two security notions: {\em concealing} and {\em binding}. In a committing phase, Alice (committer) first commits a bit and sends Bob (verifier) its encrypted information, from which Bob cannot decipher her bit. In an opening phase, she reveals her bit; however, Bob can detect her wrongdoing if she presents the bit different from what she had committed in the earlier phase.

A quantum key distribution scheme \cite{BB84} is well-known to be unconditionally secure, whereas it is proven that no quantum bit commitment scheme  achieves unconditional security \cite{LC97,May97}. Chailloux and Kerenidis \cite{CK11} recently argued that no protocol for quantum bit commitment achieves a cheating probability of less than $0.739$.
These facts immediately prompt us to seek a reasonable means to build practically durable protocols for quantum bit commitment.
Technological limitations of current quantum device, on one hand, have been  used to design feasible protocols in, \eg \cite{DFSS05,DV12}.
Dumais, Mayers, and Salvail \cite{DMS00}, on the other hand, used a computationally difficult problem to construct a (non-interactive) protocol for quantum bit commitment. Their protocol requires the total communication cost of $O(n)$ qubits, where $n$ is a security parameter, and the security of the protocol relies on the existence of {\em quantum one-way permutation} (namely, a function that permutes a given set of strings with the one-way property that the function is easily computed buy is hard to be inverted).
In particular, the binding condition for the protocol is proven as follows.
If the condition does not hold, then Alice must have a strategy to deceive Bob. Using her strategy, we can efficiently invert a given quantum one-way permutation on a quantum computer, leading to a contradiction.
This protocol was later extended by Tanaka \cite{Tan03} to quantum string commitment using an additional technique of quantum fingerprinting to reduce  the communication cost between Alice and Bob. Recently, Koshiba and Odaira \cite{KO11} reduced this assumption to the existence of quantum one-way functions.

Whether a quantum one-way permutation exists is still an open question and it seems quite difficult to settle down the question. Naturally, we can ask if a use of quantum one-way permutation can be replaced by any other (seemingly weaker) computational assumption. In this paper, we look for other means to construct a quantum bit commitment protocol; in particular, we are interested in a computational problem of distinguishing between two ensembles of quantum states. This type of problem has been used to guarantee the security of quantum protocols. Chailloux, Kerenidis, and Rosgen \cite{CKR11} drew from
a slightly technical assumption of $\mathrm{QSZK}\nsubseteq \mathrm{QMA}$,\footnote{This statement means that there exists a quantum statistical zero-knowledge proof system that cannot be expressed as a form of quantum Merlin-Arthur proof system.}
which is not known to be true, a conclusion that a scheme for  ``auxiliary-input'' quantum bit commitment (which allows Alice and Bob to apply the same POVM operations during the committing phase) exists.
The purpose of this paper is to present a new scheme for quantum bit commitment with {\em no} such auxiliary inputs.

In 2005, Kawachi, Koshiba, Nishimura, and Yamakami \cite{KKNY12} devised two special ensembles of (reduced) quantum states and, from these ensembles,
they built a {\em quantum public-key cryptosystem} whose security relies on a computational assumption that the ensembles are hard to distinguish efficiently. These ensembles posses quite useful properties (stated in Section \ref{sec:permutation}), which are interesting on its own light and have been sought for other applications. As such an application, we actually use the two ensembles to build the aforementioned new scheme for quantum bit commitment.
Our scheme, given in Section \ref{sec:new-protocol}, is non-interactive, uses no auxiliary inputs, and achieves {\em computational concealing} and {\em statistical binding} at communication cost of $O(n\log{n})$ qubits.
Those security conditions of our scheme follows from an assumption on  computational hardness of distinguishing the two ensembles. Note that, if this hardness assumption fails to hold, then, for example, we can efficiently solve on a quantum computer  a classical problem, known as the {\em graph automorphism problem} (GA), in which we are asked to determine whether a given undirected graph is isomorphic to itself. This problem is not yet known to be either polynomial-time solvable or $\np$-complete. More importantly, our scheme has a concrete, explicit description, independent of the correctness of the assumption, and potentially it might be applied to other fields.

The computationally concealing condition for our scheme follows directly from the indistinguishability of two encrypted quantum states produced for two different committed bits $0$ and $1$. The statistical binding condition is met by an application of {\em state partitioning}, which is a means to partition a given quantum state into two specific orthogonal states. The details of the security conditions will be given in Section \ref{sec:analysis-protocol}.

%%%%%%%%%%%%%%%%%
\section{Main Theorem}\label{sec:main-contribution}

Throughout this paper, we will work with various finite dimensional Hilbert spaces. For instance, $\HH_{2}$ denotes the $2$-dimensional Hilbert space  spanned by a binary basis $\{\ket{0},\ket{1}\}$; that is, $\HH_{2} =\mathrm{span}\{\qubit{0},\qubit{1}\}$. We use Dirac's ket notation $\ket{\phi}$ to express quantum states and $\bra{\phi}$ for the conjugate transpose of $\ket{\phi}$. The notation $\measure{\phi}{\psi}$ expresses the inner product of $\ket{\phi}$ and $\ket{\psi}$. The {\em norm} of $\ket{\phi}$ is given as $\|\ket{\phi}\| =\sqrt{\measure{\phi}{\psi}}$. An {\em orthogonal measurement} (or von Neumann measurement) of a quantum state is described by a set of orthogonal projections acting on a given Hilbert space.

We will use informal term of {\em quantum algorithms} to describe transformations of quantum bits (or qubits) throughout this paper. A quantum algorithm has been often modeled by a mechanical device of {\em quantum Turing machine} \cite{BV97} or {\em quantum circuit families} \cite{Yao93}. We are particularly interested in quantum algorithms that terminate within a polynomial number of steps with respect to the size of input instances. We call such algorithms {\em polynomial-time algorithms}.

We choose the following definition for the computationally concealing condition, because this captures a more intuitive notion of Bob being unable  to gaining any significant amount of information out of Alice. As we will show in Lemma \ref{distinguish-bound}, this definition comes from the indistinguishability between two quantum states sent from Alice.

In a committing phase, Alice encodes her committed bit $a$ into a quantum state and sends its (possibly reduced) state $\chi_a$ to Bob. We demand the following security against Bob.

\begin{definition}(computational concealing)
A non-interactive quantum bit commitment scheme is {\em computationally concealing} if, for any positive polynomial $p$, there is no polynomial-time quantum algorithm that outputs $a$ from instance $\chi_a$  with error probability at least $1/2+1/p(n)$ for any $n\in\nat^{+}$.
\end{definition}

The security against Alice requires the following notion of statistical binding. In the committing phase, Alice starts with $\ket{0}$. She applies a quantum transformation $U_1$ and sends a subsystem $\HH_{commit}$. At the beginning of an {\em opening phase} (or a {\em revealing phase}), Alice applies $U_2^{(a)}$, where $a\in \{0,1\}$, if she wants to convince Bob that her committed bit is $a$. Alice's cheating strategy is modeled by a triplet $\UU=(U_1,U_2^{(0)},U_2^{(1)})$. Let $T_a^{(\UU)}(n)$ be the probability that Bob convinces himself that $a$ is truly a committed bit, provided that Bob faithfully follows the scheme.

\begin{definition}(statistical binding)
A non-interactive quantum bit commitment scheme is {\em statistically binding} if there exists a negligible function $\varepsilon(n)$ such that, for any cheating strategy  $\UU=(U_1,U_2^{(0)},U_2^{(1)})$ of Alice, $T_0^{(\UU)}(n)+T_1^{(\UU)}(n)\leq 1+  \varepsilon(n)$ holds for every length $n\in\nat^{+}$.
\end{definition}

Our main theorem concerns the notion of indistinguishable ensembles of quantum states which are generated efficiently. First, we introduce the necessary terminology to explain the statement.

It is time to introduce extra notions and notations. Let $\nat^{+}$ denote the set of all positive integers and set $\nat=\{0\}\cup\nat^{+}$. An ensemble $\{\rho(n)\}_{n\in\nat^{+}}$ of (reduced) quantum states is said to be {\em efficiently generated} if there exist two polynomially-bounded\footnote{A function $f:\nat^{+}\rightarrow\nat$ is polynomially bounded if there exists a (positive) polynomial $p$ for which $f(n)\leq p(n)$ for all $n\in\nat^{+}$.} functions $q,\ell:\nat^{+}\rightarrow\nat^{+}$ and a polynomial-time quantum algorithm $\AAA$ such that, on every input $\ket{1^n}\ket{0}$ ($n\in\nat^{+}$), (1) $\AAA$ generates $\ket{1^n}\ket{\Phi}$ of $q(n)$ qubits  and (2) $\rho(n)$ is obtained by tracing out the first $\ell(n)$ qubits of $\ket{\Phi}$; in notation, $\rho(n) =\trace_{\ell(n)}(\density{\Phi}{\Phi})$.
Let $\{\rho(n)\}_{n\in\nat^{+}}$ and $\{\chi(n)\}_{n\in\nat^{+}}$ be two ensembles of (reduced) quantum states. We say that a quantum algorithm $\AAA$ {\em distinguishes between $\{\rho(n)\}_{n\in\nat^{+}}$ and $\{\chi(n)\}_{n\in\nat^{+}}$ with advantage $\delta(n)$} \cite{KKNY12} if, for every $n\in\nat^{+}$, it holds that
\begin{equation}\label{eqn:def-advantage}
\delta(n) = \left| \prob[\AAA(1^n,\rho(n))=1] -  \prob[\AAA(1^n,\chi(n))=1] \right|.
\end{equation}
The succinct notation $\AAA(1^n,\rho(n))$ in Eq.~(\ref{eqn:def-advantage}) formally expresses $\AAA(\density{1^n}{1^n}\otimes \rho(n))$.

A function $\mu:\nat\rightarrow[0,1]$ is called {\em negligible} if, for any positive polynomial $p$, $\mu(n)\leq 1/p(n)$ holds for all but finitely many numbers $n\in\nat$.
If if there exists a polynomial-time quantum algorithm that distinguishes between two  ensembles $\{\rho(n)\}_{n\in\nat^{+}}$ and $\{\chi(n)\}_{n\in\nat^{+}}$ of (reduced) quantum states with non-negligible advantage, then the two ensembles are said to be {\em efficiently indistinguishable}. Otherwise, they are called {\em efficiently indistinguishable}.

\begin{theorem}\label{main_theorem}
{\rm (main theorem)}
There exists a pair of efficiently generated ensembles of (reduced) quantum states such that (1) they are efficiently indistinguishable and (2) from them, we can construct a non-interactive quantum bit commitment with  computationally canceling and statistically binding conditions.
\end{theorem}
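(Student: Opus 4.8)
\medskip
\noindent\textbf{Proof proposal.}
The plan is to instantiate the theorem with the pair of ensembles $\{\rho^{(0)}(n)\}_{n\in\nat^{+}}$ and $\{\rho^{(1)}(n)\}_{n\in\nat^{+}}$ of reduced states built from the symmetric group in \cite{KKNY12} (to be recalled in Section~\ref{sec:permutation}), and then to read off the two security conditions from their structural properties. These ensembles come equipped with an explicit polynomial-time generator: on input $\ket{1^n}\ket{0}$ it prepares, for each $a\in\{0,1\}$, a pure state $\ket{\Phi^{(a)}(n)}$ of $q(n)$ qubits such that tracing out a designated block of $\ell(n)$ qubits leaves $\rho^{(a)}(n)$; hence both ensembles are efficiently generated in the sense of Section~\ref{sec:main-contribution}. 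Part~(1) of the theorem -- efficient indistinguishability -- is exactly the complexity-theoretic assumption we adopt, and I will record that it is implied by the quantum hardness of the graph automorphism problem through the reduction of \cite{KKNY12}.

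For part~(2), the scheme (detailed in Section~\ref{sec:new-protocol}) is the natural one: to commit a bit $a$, Alice runs the generator, keeps the $\ell(n)$-qubit block in a private register $\HH_{priv}$, and sends the complementary register $\HH_{commit}$, whose state is $\chi_{a}=\rho^{(a)}(n)$, to Bob; to open, Alice announces $a$ and hands over $\HH_{priv}$, and Bob verifies by running the inverse of the (unitarily dilated) generator for the claimed value $a$ on $\HH_{priv}\otimes\HH_{commit}$ together with the control string, accepting iff the work qubits return to $\ket{0}$. Write $\Pi_{a}$ for Bob's accepting projector under the claim ``$a$'', acting on $\HH_{priv}\otimes\HH_{commit}$ (padded to a common space for $a=0,1$).

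Computational concealing will follow immediately from the indistinguishability of $\chi_{0}$ and $\chi_{1}$ via the reduction recorded in Lemma~\ref{distinguish-bound}: a polynomial-time algorithm that recovers $a$ from $\chi_{a}$ with probability $1/2+1/p(n)$ for infinitely many $n$ is turned, by the usual averaging identity, into a distinguisher of $\{\rho^{(0)}(n)\}$ and $\{\rho^{(1)}(n)\}$ with advantage at least $2/p(n)$, contradicting the assumption. For statistical binding, fix a cheating triple $\UU=(U_{1},U_{2}^{(0)},U_{2}^{(1)})$ and let $\ket{\psi}$ be the state of $\HH_{priv}\otimes\HH_{commit}$ produced by $U_{1}$ followed by the appropriate $U_{2}^{(a)}$; since $U_{2}^{(a)}$ touches only Alice's side, the committed register carries the same state in both openings, and $T_{a}^{(\UU)}(n)=\|\Pi_{a}\ket{\psi}\|^{2}$. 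The operator inequality $\|\Pi_{0}\ket{\psi}\|^{2}+\|\Pi_{1}\ket{\psi}\|^{2}\le \lambda_{\max}(\Pi_{0}+\Pi_{1})\le 1+2\|\Pi_{0}\Pi_{1}\|$, valid for any two projections, then reduces everything to showing $\|\Pi_{0}\Pi_{1}\|\le \varepsilon(n)/2$ for a negligible $\varepsilon$ that depends only on the scheme, not on $\UU$.

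This last bound is the heart of the argument, and is where the \emph{state partitioning} property of the KKNY ensembles does the work: although $\rho^{(0)}(n)$ and $\rho^{(1)}(n)$ are computationally indistinguishable as states on $\HH_{commit}$ alone, their legitimate purifications -- the states Bob will accept upon opening -- occupy subspaces of $\HH_{priv}\otimes\HH_{commit}$ that are orthogonal up to a $2^{-\Omega(n)}$ error; equivalently, once Alice reveals the private block, a single orthogonal measurement splits any committed state into a part certifying ``$0$'' and a part certifying ``$1$''. I expect the main obstacles to be (i) extracting this near-orthogonality from the group-theoretic description of the ensembles (the overlaps coming from nontrivial graph automorphisms are exactly what make the error negligible rather than zero) and (ii) making the verification robust against an Alice who places an arbitrary, non-legitimate state on $\HH_{commit}$, so that the negligible slack $\varepsilon(n)$ genuinely absorbs all cheating strategies; granting these, the remaining steps are routine bookkeeping with the dilations $U_{1},U_{2}^{(0)},U_{2}^{(1)}$ and with the identity $T_{0}^{(\UU)}(n)+T_{1}^{(\UU)}(n)=\|\Pi_{0}\ket{\psi}\|^{2}+\|\Pi_{1}\ket{\psi}\|^{2}$.
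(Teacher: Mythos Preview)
Your treatment of computational concealing matches the paper: reduce a bit-recovering Bob to a distinguisher for $\{\rho_{0}^{(\pi)}\}$ versus $\{\rho_{1}^{(\pi)}\}$, then invoke the KKNY reduction to GA. That part is fine.

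The binding argument, however, has a genuine gap and diverges from what the paper actually does. Two issues.

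\emph{First, the operator inequality is applied to the wrong object.} You write $T_a^{(\UU)}(n)=\|\Pi_a\ket{\psi}\|^2$ with a single $\ket{\psi}$, but in the binding game Alice applies $U_2^{(a)}$ to her registers \emph{before} Bob measures, so the global state presented to $\Pi_0$ is $(U_2^{(0)}\otimes I)U_1\ket{0}$ while the one presented to $\Pi_1$ is $(U_2^{(1)}\otimes I)U_1\ket{0}$. That the reduced state on $\HH_{commit}$ is the same does not let you collapse these to one $\ket{\psi}$, because $\Pi_a$ acts on both the opening registers and the commit register. If you absorb $U_2^{(a)}$ into the projectors, the resulting $\tilde\Pi_a$ depend on $\UU$, and $\|\tilde\Pi_0\tilde\Pi_1\|$ is no longer a scheme-only quantity.

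\emph{Second, the near-orthogonality you expect does not hold.} For fixed $\pi$, the states $\ket{\Phi_0^{(\pi)}}$ and $\ket{\Phi_1^{(\pi)}}$ are exactly orthogonal, but across different hidden permutations the overlaps are constant, not $2^{-\Omega(n)}$: by Lemma~\ref{base-phi-1}(5) one has $\langle\phi_{\sigma,1}^{(\pi')}|\phi_{\sigma,0}^{(\pi)}\rangle=\tfrac12$ for $\pi\neq\pi'$, and correspondingly $\rho_0^{(\pi)}$ and $\rho_1^{(\pi')}$ have fidelity bounded below by a constant. So there is no strategy-independent negligible bound on $\|\Pi_0\Pi_1\|$ available here; the ``graph-automorphism overlaps'' you allude to do not make the error negligible.

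The paper's route is entirely different: binding is proved \emph{conditionally} by a computational reduction. Assuming a cheating strategy $\UU$ with $T_0^{(\UU)}+T_1^{(\UU)}\ge 1+1/p(n)$, the paper builds (Section~\ref{sec:algorithm-HPSP}) an efficient algorithm $\AAA_{HPSP}$ that, given $\rho_0^{(\pi')}$ with unknown $\pi'$, uses $U_1$, the distillation algorithm $\AAA_{dis}$, a swap with the HPSP instance, and then $U_2^{(0)}$ to output $\pi'$ with probability $\ge 1/8p(n)^2$ (Claims~\ref{HPSP-success-prob}--\ref{norm-bound}). This solves HPSP, which via Lemma~\ref{HSP-vs-QSCDff} and the state-partition subroutine $C_{SPA}$ solves $2\mbox{-}\mathrm{QSCD_{ff}}$, hence GA by Lemma~\ref{GA-solvable}. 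In particular, ``state partitioning'' in the paper is not a near-orthogonality statement about accepting subspaces; it is the tool that lets Bob (or the reduction) split a commit-register state along the $s=0/1$ eigenspaces once $\pi$ is known, and that lets the reduction turn an HPSP solver into a $\mathrm{QSCD_{ff}}$ distinguisher. To repair your argument you would need to replace the $\|\Pi_0\Pi_1\|$ step with this reduction-based analysis.
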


In Section \ref{sec:permutation}, we will explicitly define an ensemble pair described in Theorem \ref{main_theorem}. With help of \cite[Theorem 2.5]{KKNY12}, the existence of such an ensemble pair is guaranteed if the {\em graph automorphism problem} (GA) is difficult to solve efficiently on any  quantum computer. As an immediate consequence of the theorem, we obtain the following statement.

\begin{corollary}\label{main_corollary}
If no polynomial-time quantum algorithm solves $\mathrm{GA}$ with error probability at least $2^{-n}$, where $n$ is the vertex set size if an input graph, then there exists a non-interactive quantum bit commitment with the conditions of computationally canceling and statistically binding.
\end{corollary}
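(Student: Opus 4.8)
The plan is to obtain Corollary \ref{main_corollary} from Theorem \ref{main_theorem} essentially for free, by replacing the abstract hypothesis ``there is an efficiently indistinguishable ensemble pair'' with the concrete complexity-theoretic hypothesis on GA, using the reduction of \cite{KKNY12} as the bridge. Concretely, Section \ref{sec:permutation} will pin down an explicit pair of efficiently generated ensembles $\{\rho_0(n)\}_{n\in\nat^{+}}$ and $\{\rho_1(n)\}_{n\in\nat^{+}}$ of reduced quantum states (the ``hidden-permutation'' ensembles of \cite{KKNY12}), and the proof of Theorem \ref{main_theorem} will establish that the non-interactive scheme of Section \ref{sec:new-protocol} built from this pair is computationally concealing whenever the pair is efficiently indistinguishable (this is where Lemma \ref{distinguish-bound} enters: an adversary recovering $a$ from $\chi_a$ with advantage $1/p(n)$ yields a distinguisher for the two states Alice sends), while the statistical binding condition holds \emph{unconditionally} via the state-partitioning argument of Section \ref{sec:analysis-protocol}. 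Granting all of this, it remains only to show that the GA-hardness assumption in the hypothesis of Corollary \ref{main_corollary} forces this particular ensemble pair to be efficiently indistinguishable.

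For that last implication I would invoke \cite[Theorem 2.5]{KKNY12} in contrapositive form. That theorem turns any polynomial-time quantum algorithm $\AAA$ that distinguishes $\{\rho_0(n)\}_n$ from $\{\rho_1(n)\}_n$ with advantage $\delta(n)$ into a polynomial-time quantum algorithm that decides GA; if $\delta(n)$ is non-negligible, i.e.\ $\delta(n)\geq 1/p(n)$ for some positive polynomial $p$ and infinitely many $n$, then standard success amplification (a majority vote over polynomially many independent runs) drives the error probability on those input lengths below $2^{-n}$ while keeping the running time polynomial. Contrapositively, if no polynomial-time quantum algorithm solves GA with error probability below $2^{-n}$, then no polynomial-time quantum algorithm distinguishes the two ensembles with non-negligible advantage, which is exactly efficient indistinguishability of the pair. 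Plugging this conclusion into clause (2) of Theorem \ref{main_theorem} produces the asserted non-interactive quantum bit commitment with the computationally concealing and statistically binding properties, completing the proof.

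The steps that genuinely require care all live in the proof of Theorem \ref{main_theorem} rather than in the corollary: (i) the quantitative bookkeeping in the amplification above, namely checking that the KKNY reduction really reaches the $2^{-n}$ threshold named in the hypothesis and that the ``infinitely many $n$'' quantifier coming from non-negligibility is compatible with the way GA-hardness is quantified; and (ii) the unconditional statistical-binding step, where one must show that decomposing Alice's committed register $\HH_{commit}$ into two orthogonal components bounds $T_0^{(\UU)}(n)+T_1^{(\UU)}(n)\leq 1+\varepsilon(n)$ for a negligible $\varepsilon$ and every cheating triplet $\UU=(U_1,U_2^{(0)},U_2^{(1)})$. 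For the corollary itself, modulo Theorem \ref{main_theorem}, the only work is the contrapositive argument of the previous paragraph, so I expect no further obstacle there.
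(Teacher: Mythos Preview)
Your reduction of the corollary to Theorem~\ref{main_theorem} plus \cite[Theorem~2.5]{KKNY12} matches the paper, which simply calls Corollary~\ref{main_corollary} an immediate consequence and gives no separate proof. Where you diverge from the paper is in your account of how Theorem~\ref{main_theorem} itself is established: you assert that the statistical binding condition ``holds unconditionally via the state-partitioning argument,'' but in this paper binding is \emph{not} unconditional. Theorem~\ref{binding-result} explicitly assumes GA-hardness, and its proof proceeds by showing (Lemma~\ref{HPSP-S0-S1-bound}) that a successful cheating strategy $\UU=(U_1,U_2^{(0)},U_2^{(1)})$ for Alice yields a polynomial-time solver for the hidden-permutation search problem $\mathrm{HPSP}$, then (Lemma~\ref{HSP-vs-QSCDff}) that an $\mathrm{HPSP}$ solver yields a distinguisher for $2\mbox{-}\mathrm{QSCD_{ff}}$, and finally (Lemma~\ref{GA-solvable}) that this solves GA. The state-partition algorithm $C_{SPA}$ of Section~\ref{sec:partition} appears as an ingredient in Bob's opening-phase check and in the $\mathrm{HPSP}$-to-$\mathrm{QSCD}$ reduction, not as a standalone orthogonality argument bounding $T_0^{(\UU)}+T_1^{(\UU)}$.

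For the corollary this distinction is harmless---the GA assumption is already in the hypothesis and therefore covers both concealing and binding---so your overall conclusion survives. But your item~(ii), the ``unconditional statistical-binding step, where one must show that decomposing Alice's committed register $\HH_{commit}$ into two orthogonal components bounds $T_0^{(\UU)}(n)+T_1^{(\UU)}(n)\leq 1+\varepsilon(n)$,'' is not the argument the paper makes, and attempting it directly would not obviously succeed: the scheme's binding here is enforced through the computational reduction chain above, not information-theoretically through orthogonality of the commitment subspaces.
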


In the subsequent sections, we will prove Theorem \ref{main_theorem}.

%%%%%%%%%%%%%%%%%
%%%%%%%%%%%%%%%%%
\section{Quantum Bit Commitment}\label{sec:bit-commitment}

We will present a scheme for non-interactive quantum bit commitment. Our scheme is based on an ensemble of special quantum states, introduced in \cite{KKNY12}. In Section \ref{sec:permutation}, we will explain these quantum states. A useful property of state partitioning will be discussed in Section \ref{sec:partition}. Finally, we will present in Section \ref{sec:new-protocol} our protocol of quantum bit commitment between Alice and Bob.

%%%%%%%%%%%%%%%%%
\subsection{Special Quantum States with Hidden Permutations}
\label{sec:permutation}

As a preparation to our quantum bit commitment scheme, we will introduce an ensemble of special quantum states given in \cite{KKNY12}.

Let $n$ be any number in $\nat^{+}$, which is used as a {\em security parameter}; for our purpose, we demand that $n$ is even and $n/2$ is odd.  Let $S_n$ denote the set of all permutations $\sigma:[n]\rightarrow[n]$, where $[n]$ is the integer set $\{1,2,3,\ldots,n\}$. Since $|S_n|=n!$, every element in $S_n$ can be expressed using at most $\ceilings{\log(n!)}$ ($= O(n\log{n})$) qubits. The special set $K_n$ is a subset of $S_n$, consisting only of $\pi$ satisfying $\pi\pi=\mathrm{id}$ and $\pi(i)\neq i$ for all $i\in[n]$.

Given three elements $s\in\{0,1\}$, $\sigma\in S_n$, and $\pi\in K_n$, we define a useful quantum state $\ket{\phi_{\sigma,s}^{(\pi)}}$ as
\[
\ket{\phi_{\sigma,s}^{(\pi)}(n)} = \frac{1}{\sqrt{2}} (\ket{\sigma}+(-1)^s\ket{\sigma\pi}).
\]
For each permutation $\pi$ in $K_n$, we partition $S_n$ into two subsets $\hat{S}_0$ and $\hat{S}_1$, which satisfy the following condition:
for every index $a\in\{0,1\}$ and for all elements $\sigma\in S_n$, $\sigma\in \hat{S}_a$ implies $\sigma\pi\in \hat{S}_{1-a}$.
Let $S_n^{(\pi)}$ denote  one of these subsets of $S_n$ that contains $\mathrm{id}$.
Notice that $S_n^{(\pi)}$ is uniquely determined from $\pi$.
It is easily seen that the set $\BB^{(\pi)} = \{\ket{\phi_{\sigma,s}^{(\pi)}} \mid \sigma \in S_n^{(\pi)},s\in\{0,1\}\}$ forms a computational basis for the Hilbert space $\HH_{S_n}=\mathrm{span}\{\ket{\sigma}\mid \sigma\in S_n\}$.

In what follows, we fix $n\in\nat^{+}$ and $\pi\in K_n$. For each bit  $s\in\{0,1\}$, we define the quantum state
\[
\rho_{s}^{(\pi)}(n) = \frac{1}{|S_n|} \sum_{\sigma\in S_n} \density{\phi_{\sigma,s}^{(\pi)}(n)}{\phi_{\sigma,s}^{(\pi)}(n)}.
\]
Notice that the quantum states $\rho_{0}^{(\pi)}(n)$ and $\rho_{1}^{(\pi)}(n)$ are respectively denoted $\rho_{\pi}^{+}(n)$ and $\rho_{\pi}^{-}(n)$ in \cite{KKNY12}.
For convenience, we use the notation $\ket{\Phi_{s}^{(\pi)}(n)}$ to denote a pure quantum state $\frac{1}{\sqrt{|S_n|}}\sum_{\sigma\in S_n}\ket{\sigma}\ket{\phi_{\sigma,s}^{(\pi)}(n)}$,
which is a {\em purification} of $\rho_{s}^{(\pi)}$, because $\rho_{s}^{(\pi)}(n)$ coincides with  the partial trace $\trace_{1}(\density{\Phi_{s}^{(\pi)}(n)}{\Phi_{s}^{(\pi)}(n)})$, where $\trace_{1}$ is the {\em partial trace} over the first register (that is, the operator tracing out the first register). It is also useful to note that $\sum_{\sigma\in S_n}\ket{\sigma}\ket{\phi_{\sigma,s}^{(\pi)}} = \sum_{\sigma\in S_n}\ket{\phi_{\sigma,s}^{(\pi)}}\ket{\sigma}$; thus, it holds that
 $\trace_{1}(\density{\Phi_{s}^{(\pi)}(n)}{\Phi_{s}^{(\pi)}(n)}) = \trace_{2}(\density{\Phi_{s}^{(\pi)}(n)}{\Phi_{s}^{(\pi)}(n)})$.

To make our notation simple, we hereafter omit ``$n$'' whenever ``$n$'' is clear from the context.

Note that any quantum state $\ket{\gamma}$ in $\HH_{S_n}$ can be expressed as $\sum_{a\in\{0,1\}}\sum_{\sigma\in S_n}\alpha_{a,\sigma,\pi} \ket{\phi_{\sigma,a}^{(\pi)}}$ for any fixed permutation $\pi\in K_n$.
Basic properties of $\ket{\phi_{\sigma,s}^{(\pi)}}$ and $\ket{\Phi_{s}^{(\pi)}}$ are summarized in the following lemma. In the lemma, we conveniently use two notations ``$\wedge$'' and ``$\vee$'' to mean the logical connectives ``AND'' and ``OR'',  respectively.

\begin{lemma}\label{base-phi-1}
Let $n\in\nat^{+}$, $s\in\{0,1\}$, $\pi\in K_n$, and $\sigma,\tau\in S_n$.
\renewcommand{\labelitemi}{$\circ$}
\begin{enumerate}\vs{-1}
  \setlength{\topsep}{-2mm}%
  \setlength{\itemsep}{1mm}%
  \setlength{\parskip}{0cm}%

\item $\ket{\phi_{\sigma\pi,s}^{(\pi)}} = (-1)^s \ket{\phi_{\sigma,s}^{(\pi)}}$.

\item $\measure{\phi_{\sigma,0}^{(\pi)}}{\phi_{\sigma,1}^{(\pi)}} =0$.

\item $\measure{\phi_{\sigma,s}^{(\pi)}}{\phi_{\tau,s}^{(\pi)}}= 1$ if $\tau=\sigma$; $(-1)^s$ if $\tau=\sigma\pi$; $0$ otherwise.

\item $\measure{\phi_{\sigma,0}^{(\pi)}}{\phi_{\tau,0}^{(\kappa)}} = 1$ if
    $\pi=\kappa\wedge (\sigma=\tau \vee \sigma=\tau\pi)$; $\frac{1}{2}$ if $\pi\neq \kappa\wedge (\sigma=\tau\vee \sigma=\tau\kappa \vee \sigma=\tau\pi)$; $0$ otherwise.

\item $\measure{\phi_{\sigma,1}^{(\pi)}}{\phi_{\tau,0}^{(\kappa)}} = \frac{1}{2}$ if $\pi\neq \kappa \wedge (\sigma=\tau \vee \sigma=\tau\kappa)$; $-\frac{1}{2}$ if $\pi\neq \kappa \wedge (\sigma\pi=\tau \vee \sigma\pi=\tau\kappa)$; $0$ otherwise.
\end{enumerate}\vs{-2}
\end{lemma}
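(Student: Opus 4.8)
The statement to prove, Lemma \ref{base-phi-1}, collects five elementary inner-product identities among the states $\ket{\phi_{\sigma,s}^{(\pi)}}$. The plan is to verify each item by direct expansion using the definition $\ket{\phi_{\sigma,s}^{(\pi)}} = \frac{1}{\sqrt{2}}(\ket{\sigma}+(-1)^s\ket{\sigma\pi})$ together with the two structural facts about $K_n$: every $\pi\in K_n$ is a fixed-point-free involution, so $\pi\pi=\mathrm{id}$ and $\sigma\pi\neq\sigma$ for all $\sigma\in S_n$, and hence $\{\sigma,\sigma\pi\}$ is always a genuine $2$-element set on which right-multiplication by $\pi$ acts as a transposition.

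First I would dispose of item~1: since $(\sigma\pi)\pi=\sigma$, we get $\ket{\phi_{\sigma\pi,s}^{(\pi)}} = \frac{1}{\sqrt2}(\ket{\sigma\pi}+(-1)^s\ket{\sigma}) = (-1)^s\cdot\frac{1}{\sqrt2}((-1)^s\ket{\sigma\pi}+\ket{\sigma}) = (-1)^s\ket{\phi_{\sigma,s}^{(\pi)}}$, using $(-1)^{2s}=1$. Items~2 and~3 then follow by expanding $\measure{\phi_{\sigma,s}^{(\pi)}}{\phi_{\tau,t}^{(\pi)}} = \frac12(\measure{\sigma}{\tau} + (-1)^t\measure{\sigma}{\tau\pi} + (-1)^s\measure{\sigma\pi}{\tau} + (-1)^{s+t}\measure{\sigma\pi}{\tau\pi})$ and using orthonormality of $\{\ket{\sigma}\}_{\sigma\in S_n}$: for item~2 take $s=0,t=1,\tau=\sigma$ and note the cross terms cancel in pairs; for item~3 take $s=t$ and do a short case analysis on whether $\tau\in\{\sigma,\sigma\pi\}$ or $\tau\notin\{\sigma,\sigma\pi\}$ (the latter forcing $\{\tau,\tau\pi\}\cap\{\sigma,\sigma\pi\}=\varnothing$, hence inner product $0$).

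Items~4 and~5 are the substantive ones, because they compare states built from two \emph{different} involutions $\pi\neq\kappa$. The key observation is that when $\pi\neq\kappa$ in $K_n$, the sets $\{\sigma,\sigma\pi\}$ and $\{\tau,\tau\kappa\}$ can overlap in at most one element (if they shared two elements then right-multiplication by $\pi$ and by $\kappa$ would agree as the unique transposition swapping those two cosets, forcing $\pi=\kappa$). So in the $4$-term expansion of $\measure{\phi_{\sigma,s}^{(\pi)}}{\phi_{\tau,t}^{(\kappa)}} = \frac12(\measure{\sigma}{\tau} + (-1)^t\measure{\sigma}{\tau\kappa} + (-1)^s\measure{\sigma\pi}{\tau} + (-1)^{s+t}\measure{\sigma\pi}{\tau\kappa})$ at most one of the four Kronecker deltas can fire. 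I would then enumerate which single term survives: for item~4 ($s=t=0$) a match $\sigma=\tau$ or $\sigma=\tau\kappa$ or $\sigma\pi=\tau$ each gives $+\frac12$, and the remaining possibility $\sigma\pi=\tau\kappa$ reduces (multiply both sides by the relevant involution) to $\sigma=\tau\pi$, also $+\frac12$ — consolidating to ``$\frac12$ if $\sigma=\tau\vee\sigma=\tau\kappa\vee\sigma=\tau\pi$''; for item~5 ($s=1,t=0$) the sign $(-1)^s$ or $(-1)^{s+t}$ attached to the surviving term produces the $-\frac12$ cases, and one again rewrites $\sigma\pi=\tau\kappa$ as $\sigma=\tau\pi$ (wait — here it is the term with coefficient $(-1)^{s+t}=-1$), yielding the stated trichotomy. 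Finally the $\pi=\kappa$ clauses of items~4–5 are just item~3 restated.

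The only mild obstacle is bookkeeping: making sure the four possible single-delta cases in items~4 and~5 are partitioned correctly and that the coset-rewriting identities (e.g. $\sigma\pi=\tau\kappa \iff \sigma=\tau\kappa\pi$, and separately $\iff \sigma\pi\kappa=\tau$) are applied consistently so that the three listed conditions in each item are exactly the union of the surviving cases with the correct signs; but there is no analytic difficulty and each verification is a two-line computation once the ``at most one overlap'' principle for distinct involutions is in hand.
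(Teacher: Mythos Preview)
Your approach is essentially the paper's: expand each inner product into four Kronecker-delta terms and case-split, with your ``at most one overlap'' observation for $\pi\neq\kappa$ making explicit what the paper uses only tacitly (it just writes ``the other terms in the above expansion of $2P$ are all zeros, because $\pi,\kappa\neq\mathrm{id}$ and $\pi\neq\kappa$'').

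There is one concrete slip. In item~4 you claim the fourth possibility $\sigma\pi=\tau\kappa$ ``reduces (multiply both sides by the relevant involution) to $\sigma=\tau\pi$''. It does not: right-multiplication by $\pi$ gives $\sigma=\tau\kappa\pi$, and by $\kappa$ gives $\sigma\pi\kappa=\tau$, neither of which is $\sigma=\tau\pi$ unless $\pi=\kappa$. (You state the correct rewritings yourself in your final paragraph, so this is an internal inconsistency rather than a misconception.) Consequently, when $\pi\neq\kappa$ and $\sigma\pi=\tau\kappa$ holds while none of the three listed equalities do, the inner product is still $\tfrac12$, not $0$. The paper's own proof of item~4 has exactly the same blind spot: it treats the three cases $\sigma=\tau$, $\sigma=\tau\kappa$, $\sigma=\tau\pi$ and then asserts that otherwise ``no terms in the expansion of $2P$ are $1$'', overlooking the term $\measure{\sigma\pi}{\tau\kappa}$. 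In short, item~4 as stated omits the clause $\sigma\pi=\tau\kappa$ (contrast item~5, which does list all four matches); your reduction step is the wrong patch, and the honest fix is to record $\sigma\pi=\tau\kappa$ as a separate $+\tfrac12$ case.
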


\begin{proof}
(1) Since $\sigma\pi\pi=\sigma$, it follows that  $\sqrt{2}\ket{\phi_{\sigma,s}^{(\pi)}} = \ket{\sigma\pi\pi}+(-1)^{s}\ket{\sigma\pi} = (-1)^s[\ket{\sigma\pi} +(-1)^s\ket{\sigma\pi\pi} ] = (-1)^s\sqrt{2}\ket{\phi_{\sigma\pi,s}^{(\pi)}}$. This implies that $\ket{\phi_{\sigma,s}}=(-1)^{s}\ket{\phi_{\sigma\pi,s}}$, leading to the desired consequence.

(2) For simplicity, we write $P$ for $\measure{\phi_{\sigma,0}^{(\pi)}}{\phi_{\sigma,1}^{(\pi)}}$. Note that $\pi\in K_n$ implies $\sigma\pi\neq\sigma$ because $\sigma\pi=\sigma$ is equivalent to $\pi=\mathrm{id}$. Since $2P = (\bra{\sigma}+\bra{\sigma\pi})(\ket{\sigma}-\ket{\sigma\pi}) = \measure{\sigma}{\sigma} + \measure{\sigma\pi}{\sigma} - \measure{\sigma}{\sigma\pi} - \measure{\sigma\pi}{\sigma\pi}$, $2P$ equals $0$.

(3) Consider the case of $s=0$. Let $P= \measure{\phi_{\sigma,s}^{(\pi)}}{\phi_{\tau,s}^{(\pi)}}$. Note that $2P = 2(\measure{\sigma}{\tau}+\measure{\sigma}{\tau\pi})$ since $\measure{\sigma\pi}{\tau\pi} = \measure{\sigma}{\tau}$. Since $\pi\neq\mathrm{id}$, $\measure{\sigma}{\tau}=1$ implies $\measure{\sigma}{\tau\pi}=0$ and also  $\measure{\sigma}{\tau\pi}=1$ implies $\measure{\sigma}{\tau}=0$. Thus, if either $\tau=\sigma$ or $\tau=\sigma\pi$, we have $2P=2$; otherwise, $2P=0$. The other case of $s=1$ is similarly handled.

(4) By setting $P= \measure{\phi_{\sigma,0}^{(\pi)}}{\phi_{\tau,0}^{(\kappa)}}$, we obtain
$2P = \measure{\sigma}{\tau}+\measure{\sigma}{\tau\kappa} + \measure{\sigma\pi}{\tau} + \measure{\sigma\pi}{\tau\kappa}$. If $\pi=\kappa$, then (3) implies the desired result. Now, assume that $\pi\neq \kappa$. If $\sigma=\tau$, then $\measure{\sigma}{\tau}=1$ and the other terms in the above expansion of $2P$ are all zeros, because $\pi,\kappa\neq \mathrm{id}$ and $\pi\neq\kappa$. From these results follows $2P=1$. Similarly, if $\sigma=\tau\kappa$, then $\measure{\sigma}{\tau\kappa}=1$ and the other terms are zeros; thus, we obtain $2P=1$. The remaining case of $\sigma=\tau\pi$ is similarly handled. When all the above-mentioned  cases fail, no terms in the expansion of $2P$ are $1$. Therefore, we conclude that $2P=0$.

(5) Let $P= \measure{\phi_{\sigma,1}^{(\pi)}}{\phi_{\tau,0}^{(\kappa)}}$. Note that $2P = \measure{\sigma}{\tau} - \measure{\sigma\pi}{\tau} + \measure{\sigma}{\tau\kappa} - \measure{\sigma\pi}{\tau\kappa}$. If $\pi=\kappa$, then (2) implies $2P=0$. In what follows, we assume that $\pi\neq\kappa$. If $\sigma=\tau$, then it follows that  $2P=\measure{\sigma}{\tau}=1$
because $\pi,\kappa\neq\mathrm{id}$ and $\pi\neq\kappa$. Thus, we obtain  $2P=1$. Similarly, when $\sigma=\tau\kappa$, we obtain $2P= \measure{\sigma}{\tau\kappa}= 1$. From $\sigma\pi=\tau$, it follows that  $2P=-\measure{\sigma\pi}{\tau}=-1$. Finally, we note that $\sigma\pi=\tau\kappa$ implies $2P= -\measure{\sigma\pi}{\tau\kappa} = -1$.
\end{proof}

We give another useful property of $\ket{\phi_{\sigma,s}^{(\pi)}}$.  This property will play an important role in Section \ref{sec:distilation}.

\begin{lemma}\label{Phi-rewritten}
For fixed $\pi\in K_n$ and $\sigma\in S_n$, it holds that $\ket{\phi_{\sigma,1}^{(\pi)}} = \frac{1}{|K_n|-1}\sum_{\kappa\in K_n} (\ket{\phi_{\sigma,0}^{(\kappa)}} - \ket{\phi_{\sigma\pi,0}^{(\kappa)}})$.
\end{lemma}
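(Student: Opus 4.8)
The plan is to prove the identity by expanding both sides in the computational basis $\{\ket{\sigma}\}_{\sigma\in S_n}$ of $\HH_{S_n}$ and matching coefficients; an alternative would be to pair both sides against the orthonormal basis $\BB^{(\lambda)}$ and invoke the inner‑product formulas of Lemma~\ref{base-phi-1}(4)--(5), but the bare basis expansion is the most transparent route.

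First I would rewrite the right‑hand side. Since $\ket{\phi_{\tau,0}^{(\kappa)}} = \frac{1}{\sqrt{2}}(\ket{\tau}+\ket{\tau\kappa})$, each summand is
\[
\ket{\phi_{\sigma,0}^{(\kappa)}} - \ket{\phi_{\sigma\pi,0}^{(\kappa)}} = \frac{1}{\sqrt{2}}\bigl(\ket{\sigma}+\ket{\sigma\kappa}-\ket{\sigma\pi}-\ket{\sigma\pi\kappa}\bigr).
\]
Summing over the $|K_n|$ elements $\kappa\in K_n$ produces a ``diagonal'' contribution $\frac{|K_n|}{\sqrt{2}}(\ket{\sigma}-\ket{\sigma\pi})$ together with an ``off‑diagonal'' contribution $\frac{1}{\sqrt{2}}\sum_{\kappa\in K_n}(\ket{\sigma\kappa}-\ket{\sigma\pi\kappa})$. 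The crucial preliminary remark --- immediate from $\pi\pi=\mathrm{id}$, or from Lemma~\ref{base-phi-1}(1) --- is that the single term $\kappa=\pi$ of the off‑diagonal part equals $\frac{1}{\sqrt{2}}(\ket{\sigma\pi}-\ket{\sigma})$, which cancels exactly one copy of the diagonal part. Using $\ket{\phi_{\sigma,1}^{(\pi)}} = \frac{1}{\sqrt{2}}(\ket{\sigma}-\ket{\sigma\pi})$, the right‑hand side (before the normalization $\frac{1}{|K_n|-1}$) therefore collapses to
\[
(|K_n|-1)\,\ket{\phi_{\sigma,1}^{(\pi)}} \;+\; \frac{1}{\sqrt{2}}\sum_{\kappa\in K_n,\ \kappa\neq\pi}\bigl(\ket{\sigma\kappa}-\ket{\sigma\pi\kappa}\bigr).
\]

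So, after dividing by $|K_n|-1$, the claim reduces to the purely combinatorial assertion that the residual sum $\sum_{\kappa\in K_n,\ \kappa\neq\pi}(\ket{\sigma\kappa}-\ket{\sigma\pi\kappa})$ vanishes. Cancelling the harmless common left factor $\sigma$ (a bijection of $S_n$), this says precisely that left multiplication by $\pi$ maps $K_n\setminus\{\pi\}$ bijectively onto itself --- in particular that $\pi\kappa\in K_n$ whenever $\kappa\in K_n\setminus\{\pi\}$. This is the step I expect to be the main obstacle, and it is delicate: $K_n\cup\{\mathrm{id}\}$ is not a subgroup of $S_n$ in general, so closure of $K_n$ under $\kappa\mapsto\pi\kappa$ is not formal, and this is exactly the point at which the standing arithmetic hypotheses on $n$ (that $n$ is even with $n/2$ odd, so that every element of $K_n$ is a product of $n/2$ transpositions) must be brought to bear, via a careful analysis of how composition with one fixed‑point‑free involution acts on the set of fixed‑point‑free involutions. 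I would isolate this combinatorial fact as the heart of the argument and verify it with the greatest care, since everything preceding it is routine bookkeeping.
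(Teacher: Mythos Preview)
Your overall plan coincides with the paper's: expand both sides in the basis $\{\ket{\tau}:\tau\in S_n\}$, split off the diagonal contribution $\frac{|K_n|}{\sqrt{2}}(\ket{\sigma}-\ket{\sigma\pi})$, and reduce everything to the claim that $\sum_{\kappa\in K_n}\ket{\sigma\kappa}-\sum_{\kappa\in K_n}\ket{\sigma\pi\kappa}=\ket{\sigma\pi}-\ket{\sigma}$, equivalently that left multiplication by $\pi$ permutes $K_n\setminus\{\pi\}$. The paper disposes of this in one line, asserting that $f(\kappa)=\pi\kappa$ defines a map $K_n\to K_n\cup\{\mathrm{id}\}$ and then invoking injectivity together with $f(\pi)=\mathrm{id}$ and the absence of any $\kappa$ with $f(\kappa)=\pi$ to conclude that $f$ restricts to a bijection of $K_n\setminus\{\pi\}$ onto itself.

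Your caution about this step is exactly right, and in fact the step does not go through: the asserted codomain is wrong. For $n=6$ (which meets the standing hypothesis that $n/2$ is odd) take $\pi=(12)(34)(56)$ and $\kappa=(13)(24)(56)$, both in $K_6$; then $\pi\kappa=(14)(23)$ fixes $5$ and $6$, hence lies in neither $K_6$ nor $\{\mathrm{id}\}$. Correspondingly the residual sum does not vanish: with $\sigma=\mathrm{id}$ the coefficient of $\ket{(13)(24)(56)}$ on the right-hand side of the identity is $1/\bigl((|K_6|-1)\sqrt{2}\bigr)\neq 0$, whereas on the left it is $0$. So the combinatorial assertion you were planning to ``verify with the greatest care'' is actually false, the identity as stated cannot hold, and the paper's own argument glosses over precisely the gap you flagged.
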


\begin{proof}
Fix $\pi\in K_n$ and $\sigma\in S_n$. Note that the value $\sqrt{2}\sum_{\kappa\in K_n}(\ket{\phi_{\sigma,0}^{(\kappa)}} - \ket{\phi_{\sigma\pi,0}^{(\kappa)}})$ is $\sum_{\kappa\in K_n}((\ket{\sigma} - \ket{\sigma\pi}) + (\ket{\sigma\kappa} - \ket{\sigma\pi\kappa}))$, which equals $|K_n|(\ket{\sigma} - \ket{\sigma\pi}) + \sum_{\kappa\in K_n} (\ket{\sigma\kappa} - \ket{\sigma\pi\kappa})$.
The last term equals $\sum_{\kappa\in K_n}\ket{\sigma\kappa} - \sum_{\kappa\in K_n}\ket{\sigma\pi\kappa}$. Let us consider the function $f:K_n\rightarrow K_n\cup\{\mathrm{id}\}$ defined as $f(\kappa) = \pi\kappa$. This $f$ satisfies the following three properties: (i) $f$ is one-to-one, (ii) $f(\pi) = \mathrm{id}$, and (iii) there is no element $\kappa\in K_n$ satisfying $f(\kappa)=\pi$. We conclude that $f$ is a bijection on the restricted domain $K_n-\{\pi\}$. This fact implies that $\sum_{\kappa\in K_n} \ket{\sigma\kappa} - \sum_{\kappa\in K_n}\ket{\sigma\pi\kappa} = \ket{\sigma\pi} - \ket{\sigma}$.

Overall, $\sqrt{2}\sum_{\kappa\in K_n}(\ket{\phi_{\sigma,0}^{(\kappa)}} -  \ket{\phi_{\sigma\pi,0}^{(\kappa)}}) = |K_n|(\ket{\sigma} - \ket{\sigma\pi}) + (\ket{\sigma\pi} - \ket{\sigma})$, which equals $(|K_n|-1)(\ket{\sigma}-\ket{\sigma\pi})$; in other words,  $\sqrt{2}(|K_n|-1)\ket{\phi_{\sigma,1}^{(\pi)}}$. From this equality, the lemma follows immediately.
\end{proof}

Hereafter, we will give two quantum procedures, which are useful in the description of our quantum bit commitment scheme in Section \ref{sec:new-protocol}. First, we introduce several useful unitary operations.
The {\em Hadamard transform} $H$ acts on the system $\HH_{2}$ as $H\ket{s} = \frac{1}{\sqrt{2}}(\ket{0}+(-1)^s\ket{1})$ for every bit $s\in\{0,1\}$. The {\em controlled-$\pi$ operator} $C_{\pi}$ acts on $\HH_{2}\otimes \HH_{S_n}$ as $C_{\pi}\ket{a}\ket{\sigma}=\ket{a}\ket{\sigma\pi}$ if $a=1$, and $\ket{a}\ket{\sigma}$ otherwise. The {\em controlled-NOT$_{id}$ operator} $CNOT_{\mathrm{id}}$ acts on $\HH_{2}\otimes \HH_{S_n}$ as $CNOT_{\mathrm{id}}\ket{a}\ket{\sigma} =   (NOT\ket{a})\ket{\sigma}$ if $\sigma=\mathrm{id}$, and $\ket{a}\ket{\sigma}$ otherwise. Moreover, let $U_{sgn}$ denote a unitary operator mapping $\ket{\sigma}$ to $(-1)^{sgn(\sigma)}\ket{\sigma}$, where $\sigma\in S_n$ and $sgn(\sigma)$ is $1$ ($0$, resp.) if $\sigma$ is an even (odd, resp.) permutation in $S_n$.  The {\em controlled-SAWP} operator $C_{swap}^{(i,j)}$ (with $1\leq i<j\leq n$) exchanges the contents of the $i$th and $j$th registers among $n$ registers; that is, $C_{swap}^{(i,j)}\ket{s_1}\cdots \ket{s_i}\cdots \ket{s_j}\cdots  \ket{s_n} =  \ket{s_1}\cdots \ket{s_j}\cdots \ket{s_i}\cdots  \ket{s_n}$.

We will present two useful quantum transforms.

\s
{\sc [Procedure 1]}
The following procedure $P_1$ can generate the quantum state  $\HH = \ket{0}\ket{\sigma}\ket{\pi}\ket{\phi_{\sigma,0}^{(\pi)}}$ in the system $\HH_2\otimes \HH_{S_n}^{(1)}\otimes \HH_{S_n}^{(2)}\otimes \HH_{S_n}^{(3)}$ from $\ket{0}\ket{\sigma}\ket{\pi}\ket{\mathrm{id}}$
if $\pi\neq \mathrm{id}$.
To $\ket{\pi}\ket{0}\ket{\mathrm{id}}$ in the system $\HH_{S_n}^{(2)}\otimes \HH_{2}\otimes \HH_{S_n}^{(3)}$, we first apply two operators $I\otimes H\otimes I$ and $I\otimes C_{\pi}$, where $H$ is the Hadamard transform.
This process generates a quantum state  $\frac{1}{\sqrt{2}}\ket{\pi}(\ket{0}\ket{\mathrm{id}}+\ket{1}\ket{\pi})$.
Since $\pi\neq\mathrm{id}$, we apply $I\otimes CNOT_{\mathrm{id}}$ and generate the state
$\frac{1}{\sqrt{2}}\ket{\pi}\ket{0}(\ket{\mathrm{id}}+\ket{\pi})$. Now, consider  $\frac{1}{\sqrt{2}}\ket{\sigma}(\ket{\mathrm{id}}+\ket{\pi})$ in $\HH_{S_n}^{(2)}\otimes \HH_{S_n}^{(3)}$. Multiply $\sigma$ in the first register from the left to the second register, generating $\frac{1}{\sqrt{2}}\ket{\sigma}(\ket{\sigma}+\ket{\sigma\pi})$. In the end, we obtain $\ket{0}\ket{\sigma}\ket{\pi}\ket{\phi_{\sigma,0}^{(\pi)}}$ in $\HH$.

Similarly, we can generate $\ket{0}\ket{\pi}\ket{\Phi_0^{(\pi)}}$ from $\ket{0}\ket{\pi}\ket{0}\ket{\mathrm{id}}$ in $\HH_{2}\otimes \HH_{S_n}\otimes \HH_{S_n}\otimes \HH_{S_n}$ by running the following procedure $\tilde{P}_1$. After generating the state
$\frac{1}{\sqrt{2}}\ket{0}(\ket{\mathrm{id}}+\ket{\pi})$ as described above, we generate $\frac{1}{\sqrt{2|S_n|}}\sum_{\sigma\in S_n}\ket{\sigma}(\ket{\mathrm{id}}+\ket{\pi})$. Multiply each $\sigma$ in the first register to the content of the second register. We then  obtain
$\frac{1}{\sqrt{2|S_n|}}\ket{\sigma}(\ket{\sigma}+\ket{\sigma\pi})$, which is $\ket{\Phi_{0}^{(\pi)}}$.

\s

{\sc [Procedure 2]}
There is a simple procedure $P_2$ that transforms $\ket{\phi_{\sigma,s}^{(\pi)}}$ to $\ket{\phi_{\sigma,1-s}^{(\pi)}}$ without knowing $(s,\pi)$ as follows. Initially, we have $\ket{\phi_{\sigma,s}^{(\pi)}}$. We apply $U_{sgn}\otimes U_{sgn}$ to $\ket{\phi_{\sigma,s}^{(\pi)}}$. The resulted quantum state is $\frac{1}{\sqrt{2}}(-1)^{sgn(\sigma)}((-1)^{sgn(\sigma)}\ket{\sigma} + (-1)^{s+sgn(\sigma\pi)}\ket{\sigma\pi})$. Since $\pi$ is an odd permutation, this state equals $\frac{1}{\sqrt{2}}(-1)^{2sgn(\sigma)}(\ket{\sigma} +(-1)^{s+1}\ket{\sigma\pi})$, which is exactly $\ket{\phi_{\sigma,1-s}^{(\pi)}}$.
If we apply $I\otimes P_2$ to $\ket{\Phi_{s}^{(\pi)}}$, then we immediately obtain $\ket{\Phi_{1-s}^{(\pi)}}$.

%%%%%%%%
\subsection{State Partitioning}\label{sec:partition}

Our quantum bit commitment protocol in Section \ref{sec:new-protocol} requires a method to ``partition'' a given quantum state $\chi$ in the system $\HH_{S_n}$ into two orthogonal states $\chi_0$ and $\chi_1$ that satisfy an extra property. A basic idea of state partitioning is inspired by a trapdoor property of \cite[Theorem 2.1]{KKNY12}.
Let $n\in\nat^{+}$ and $\chi$ be any reduced state in $\HH_{S_n}$.
This state can be expressed as $\chi = \chi_0+\chi_1$, where
$\chi_s = \sum_{\sigma\in S}p_{\sigma}\density{\phi_{\sigma,s}^{(\pi)}}{\phi_{\sigma,s}^{(\pi)}}$ for each bit $s\in\{0,1\}$.

\s
\n\hrulefill \s\\
\n{\sc State Partition Algorithm:} $C_{SPA}$
\renewcommand{\labelitemi}{$\circ$}
\begin{itemize}\vs{-1}
  \setlength{\topsep}{-2mm}%
  \setlength{\itemsep}{1mm}%
  \setlength{\parskip}{0cm}%

  \item[(S1)] Take an instance of the form $\chi' = \density{\pi}{\pi}\otimes \chi$ in a system $\HH_{S_n}\otimes \HH_{S_n}$. Prepare $\density{0}{0}\otimes \chi'$ in        $\HH = \HH_{2}\otimes \HH_{S_n} \otimes \HH_{S_n}$.

  \item[(S2)] Apply $H\otimes I^{\otimes 2}$. Since the second register of $\HH$  contains $\pi$, we can freely use the controlled-$\pi$ operator $C_{\pi}$. Here, we apply $C_{swap}^{(2,3)} (C_{\pi}\otimes I)C_{swap}^{(2,3)}$. Finally, apply $H\otimes I^{\otimes 2}$.

  \item[(S3)] The state $\density{0}{0}\otimes \chi'$ changes into $\density{0}{0}\otimes \density{\pi}{\pi} \otimes \chi_0 + \density{1}{1}\otimes \density{\pi}{\pi} \otimes \chi_1$. When we observe the first register, we find $0$ (resp., $1$) with probability exactly $\frac{1}{2}$.
\end{itemize}\vs{-2}
\n\hrulefill\\

Here, we will briefly discuss the correctness of the above algorithm.
Let $\chi$ be given at Step (S1). Assume that $\chi = \sum_{\sigma\in S_n}\sum_{s\in\{0,1\}} p_{\sigma,s}\density{\phi_{\sigma,s}^{(\pi)}}{\phi_{\sigma,s}^{(\pi)}}$. We introduce a new system $\HH\otimes \HH'$ and let a purification of $\chi$ in $\HH\otimes \HH' \otimes \HH_{S_n}$ be $\ket{\Phi} = \sum_{\sigma\in S_n}\sum_{s\in\{0,1\}} \sqrt{p_{\sigma,s}} \ket{s}\ket{\sigma} \ket{\phi_{\sigma,s}^{(\pi)}}$. Note that $\chi = \trace_{1,2}(\density{\Phi}{\Phi})$.
For each fixed $s$, we write $\ket{\psi_s} = \sum_{\sigma\in S_n}\sqrt{p_{\sigma,s}} \ket{s} \ket{\sigma} \ket{\phi_{\sigma,s}^{(\pi)}}$. Note that $\ket{\Phi} = \ket{\psi_0}+\ket{\psi_1}$, where $\chi_{s} = \trace_{1,2}(\density{\psi_s}{\psi_s}) = \sum_{\sigma\in S_n}\sqrt{p_{\sigma,s}}\density{\phi_{\sigma,s}^{(\pi)}}{\phi_{\sigma,s}^{(\pi)}}$ for each $s$.

Initially, we have a state $\ket{\Phi'} = \ket{0}\ket{\pi}\ket{\Phi}$, which equals $\sum_{\sigma\in S_n}\sum_{s\in\{0,1\}} \sqrt{p_{\sigma}} \ket{0} \ket{\pi}\ket{s} \ket{\sigma}\ket{\phi_{\sigma,s}^{(\pi)}}$.
Since we work on a purified state, it is convenient to expand $C_{\pi}$ to $\tilde{C}_{\pi}$ as follows. Let $\tilde{C}_{\pi} = C_{swap}^{(1,2)} C_{swap}^{(3,5)}(I \otimes C_{\pi} \otimes I^{\otimes 2})C_{swap}^{(3,5)}C_{swap}^{(1,2)}$. Step (S2) produces
$\ket{\Phi''} = (H\otimes I^{\otimes 4}) \tilde{C}_{\pi}(H\otimes I^{\otimes 4})\ket{\Phi'}$. By a direct calculation, the quantum state $(H\otimes I^{\otimes 4}) \tilde{C}_{\pi}(H\otimes I^{\otimes 4})\ket{0}\ket{\pi}\ket{\psi_s}$ equals
\[
\frac{1}{2}\sum_{\sigma\in S_n}\sqrt{p_{\sigma,s}} \ket{0} \ket{\pi} \ket{s} \ket{\sigma} \left( \ket{\phi_{\sigma,s}^{(\pi)}} + \ket{\phi_{\sigma\pi,s}^{(\pi)}} \right) +
\frac{1}{2}\sum_{\sigma\in S_n}\sqrt{p_{\sigma,s}} \ket{1} \ket{\pi} \ket{s} \ket{\sigma} \left( \ket{\phi_{\sigma,s}^{(\pi)}} - \ket{\phi_{\sigma\pi,s}^{(\pi)}} \right),
 \]
which coincides with $\ket{s}\ket{\pi}\ket{\psi_s}$ because  Lemma \ref{base-phi-1} implies $\ket{\phi_{\sigma,s}^{(\pi)}} = (-1)^s \ket{\phi_{\sigma\pi,s}^{(\pi)}}$.
Therefore, we conclude that $\ket{\Phi''} = \ket{0}\ket{\pi}\ket{\psi_0}+\ket{1}\ket{\pi}\ket{\psi_1}$.

\ms

Next, we want to examine the behavior of $C_{SPA}$ on instance $\density{\kappa}{\kappa}\otimes \chi$ given in Step (S1), where  $\kappa$ is different from $\pi$. In what follows, let $\kappa$ be any element in $K_n-\{\pi\}$. Recall that $\ket{\Phi} = \sum_{\sigma\in S_n} \sqrt{p_{\sigma}}\ket{s} \ket{\sigma}\ket{\phi_{\sigma,s}^{(\pi)}}$. Here, the quantum algorithm starts with $\ket{\Phi'_{\kappa}} = \ket{0}\ket{\kappa} \ket{\Phi}$. Following Step (S2), we calculate $(H\otimes I^{\otimes 4}) \tilde{C}_{\kappa}(H\otimes I^{\otimes 4})\ket{0}\ket{\pi}\ket{\psi_0}$. Note that, since the content of the second register is $\kappa$, we apply $C_{\kappa}$ instead of $C_{\pi}$.

The algorithm $C_{SPA}$ produces a quantum state
\[
\sum_{a\in \{0,1\}} \frac{1}{2\sqrt{2}}\sum_{\sigma\in S_n}\sqrt{p_{\sigma,s}} \ket{a} \ket{\pi} \ket{s} \ket{\sigma} \left( \left( \ket{\sigma} +(-1)^s\ket{\sigma\pi} \right) + (-1)^a \left( \ket{\sigma\kappa} + (-1)^s\ket{\sigma\pi\kappa} \right) \right).
 \]
This is equivalent to
\[
\sum_{a\in \{0,1\}} \frac{1}{2}\sum_{\sigma\in S_n}\sqrt{p_{\sigma,s}} \ket{a} \ket{\pi} \ket{s} \ket{\sigma} \left( \ket{\phi_{\sigma,a}^{(\kappa)}} + (-1)^s \ket{\phi_{\sigma\pi,a}^{(\kappa)}} \right).
\]
When $a=0$ and $s=0$, we obtain $\frac{1}{2}\sum_{\sigma\in S_n}\sqrt{p_{\sigma,s}} \ket{0} \ket{\pi} \ket{0} \ket{\sigma} (\ket{\phi_{\sigma,0}^{(\kappa)}} + \ket{\phi_{\sigma\pi,0}^{(\kappa)}})$.
Since $\measure{\phi_{\sigma,0}^{(\kappa)}}{\phi_{\sigma\pi,0}^{(\kappa)}} = 0$ by Lemma \ref{base-phi-1}(3), the norm of this state is
\[
\left\| \frac{1}{2}\sum_{\sigma\in S_n}\sqrt{p_{\sigma,s}} \ket{0} \ket{\pi} \ket{0} \ket{\sigma} \left( \ket{\phi_{\sigma,0}^{(\kappa)}} + \ket{\phi_{\sigma\pi,0}^{(\kappa)}} \right) \right\|^2 = \frac{1}{4}\sum_{\sigma\in S_n} p_{\sigma,0} \left( 2 + 2\measure{\phi_{\sigma,0}^{(\kappa)}}{\phi_{\sigma\pi,0}^{(\kappa)}} \right) = \frac{1}{2}\sum_{\sigma}p_{\sigma,0}.
\]
Similarly, when $a=0$ and $s=1$, the state  $\frac{1}{2}\sum_{\sigma\in S_n}\sqrt{p_{\sigma,1}} \ket{0} \ket{\pi} \ket{1} \ket{\sigma} (\ket{\phi_{\sigma,0}^{(\kappa)}} - \ket{\phi_{\sigma\pi,0}^{(\kappa)}})$ has norm $\frac{1}{2}\sum_{\sigma}p_{\sigma,1}$. By combining those values, we conclude that the probability of observing $0$ in the first register is $\frac{1}{2}\sum_{\sigma}p_{\sigma,0} + \frac{1}{2}\sum_{\sigma}p_{\sigma,1} = \frac{1}{2}$. A similar argument proves that we observe $1$ in the first register with probability exactly $\frac{1}{2}$.

%%%%%%%%
\subsection{A New Protocol}\label{sec:new-protocol}

We will present a new quantum bit commitment protocol. We use the following quantum system between Alice (committer) and Bob (verifier): $\HH_{all} = \HH_{A,private} \otimes \HH_{bit} \otimes \HH_{open} \otimes \HH_{commit} \otimes \HH_{B, private}$, where $\HH_{A,private}$ is a system that is used only by Alice, $\HH_{open}$ holds a secret key produced by Alice, $\HH_{bit}$ is a 1-qubit system for a committed bit by Alice, $\HH_{commit}$ is used to produce an encrypted information regarding a committed bit, and $\HH_{B,private}$ is a system used only by Bob. Different from $\HH_{A,private}$ and $\HH_{B,private}$, the systems  $\HH_{bit}$, $\HH_{open}$, and $\HH_{commit}$ are accessed interchangeably by Alice and Bob at specific points during an execution of the protocol.

Consider the following bit commitment scheme between Alice and Bob. Let $n$ be the security parameter on which Alice and Bob initially agree.

We intend to include the description of the ownerships of each system that makes up $\HH_{all}$.  Moreover, we write $\HH_{open}$ for the system $\HH_{open1}\otimes \HH_{open2}$.

\s
\n\hrulefill \s\\
\n{\sc Committing Phase:}
\renewcommand{\labelitemi}{$\circ$}
\begin{itemize}\vs{-1}
  \setlength{\topsep}{-2mm}%
  \setlength{\itemsep}{1mm}%
  \setlength{\parskip}{0cm}%

\item[(C1)] Initially, Alice owns the system $\HH_{A}^{(C1)} = \HH_{A,private}\otimes  \HH_{bit} \otimes \HH_{open} \otimes \HH_{commit}$ and Bob owns $\HH_{B}^{(C1)} = \HH_{B,private}$. Starting with $\ket{0}$ in $\HH_{all}$, she randomly chooses her secret key $\pi\in K_n$ in $\HH_{open2}$.

\item[(C2)] She prepares $\ket{0}\ket{\mathrm{id}}$ in $\HH_{open1}\otimes \HH_{commit}$ and generates $\ket{\Phi_0^{(\pi)}}$ as described in Section \ref{sec:permutation}.

\item[(C3)] Let $a$ be a bit that Alice wants to commit. She produces $\ket{a}$ in $\HH_{bit}$. She then transforms $\ket{\Phi_{0}^{(\pi)}}$ in $\HH_{open1}\otimes \HH_{commit}$ into $\ket{\Phi_{a}^{(\pi)}}$ by applying $P_2$ when $a=1$.

\item[(C4)] She sends the system $\HH_{commit}$ to Bob.  Bob then receives the reduced state  $\rho_{a}^{(\pi)}$, which is called a {\em commitment state}. Bob should protect it from decoherence until the opening phase. In the end, Alice's system becomes $\HH_{A}^{(C4)} = \HH_{A,private}\otimes \HH_{bit}\otimes \HH_{open}$ and Bob's becomes  $\HH_{B}^{(C4)} = \HH_{commit}\otimes H_{B,private}$.
\end{itemize}\vs{-2}
\n\hrulefill\\

In the following opening phase, Alice reveals her secret bit $a$. Bob then checks whether it is actually the bit committed by her in the committing phase.

\n\hrulefill \s\\
\n{\sc Opening Phase (or Revealing Phase):}
\renewcommand{\labelitemi}{$\circ$}
\begin{itemize}\vs{-1}
  \setlength{\topsep}{-2mm}%
  \setlength{\itemsep}{1mm}%
  \setlength{\parskip}{0cm}%

\item[(R1)] Alice's current system is $\HH_{A}^{(R1)} = \HH_{A,private}\otimes \HH_{bit} \otimes \HH_{open}$ and Bob's system is $\HH_{B}^{(R1)} = \HH_{commit}\otimes \HH_{B,private}$.  Alice sends the system $\HH_{bit} \otimes \HH_{open}$ to Bob.

\item[(R2)] Alice now owns the system $\HH_{A}^{(R2)} = \HH_{A,private}$ and Bob owns $\HH_{B}^{(R2)} = \HH_{open}\otimes \HH_{bit}\otimes \HH_{commit} \otimes \HH_{B,private}$. Bob measures the two registers  $\HH_{bit} \otimes \HH_{open2}$ on the computational basis $\{0,1\}\otimes S_n$. Assume that he obtains $(a,\pi)$ after the measurement. If $\pi\not\in K_n$, then Bob declares that Alice tries to deceive him. In what follows, we assume that $\pi\in K_n$.

\item[(R3)] Assume that, in the previous committing phase, Bob had received a reduced state $\chi$ in $\HH_{commit}$ from Alice. Bob runs the state partition algorithm $C_{SPA}$ on input  $\density{0}{0}\otimes \chi$ in $\HH_{B,private}\otimes \HH_{commit}$, provided that $\HH_{B,private}$ is a $2$-dimensional system.

\item[(R4)] Measure the system $\HH_{B,private}$. If the outcome of the measurement is not $a$, then Bob declares that Alice tries to deceive him.

\item[(R5)] Whenever $a=1$, first apply $P_2$ to change $\ket{\Phi_{1}^{(\pi)}}$ to $\ket{\Phi_{0}^{(\pi)}}$. Apply $\tilde{P}_1^{-1}$ (which is given in  Section \ref{sec:permutation}) to $\HH_2\otimes \HH_{open}\otimes \HH_{commit}$.  Measure the system $\HH_{open1}\otimes \HH_{commit}$ in state $\ket{0}\ket{\mathrm{id}}$. If  $(0,\mathrm{id})$ is observed, then Bob accepts $a$ as Alice's committed bit. Otherwise, Bob declares that Alice tries to deceive him.
\end{itemize}\vs{-2}
\n\hrulefill\\

In Step (R2), Bob does not observe the subsystem $\HH_{opn1}$ because, otherwise, the entanglement between $\HH_{open1}$ and $\HH_{commit}$ could be destroyed and Steps (R3) and (R5) might not work properly.

In the subsequent section, we will analyze the above protocol in details.

%%%%%%%%%%%%%%%%%%%%%%%%%
\section{Security Analysis of the Scheme}\label{sec:analysis-protocol}

We will examine the security of the quantum bit commitment protocol given in Section \ref{sec:new-protocol}. We will show that our protocol is computationally concealing in Section \ref{sec:concealing-condition}. This is a direct consequence of \cite{KKNY12}. A more complex analysis is required to show the statistically binding condition in \ref{sec:binding-condition}.

%%%%%
\subsection{Computationally Concealing Condition}\label{sec:concealing-condition}

The concealing condition for bit commitment requires that Bob cannot retrieve any information on $a$ during a committing phase after honest Alice commits $a$ and sends a quantum state associated with $a$. Intuitively, this condition is satisfied
because Bob does not know $\gamma$, which locks the information on $a$ inside the quantum state, and thus there may be no way for Bob to obtain the information on $a$ with probability higher than a given parameter.

%%%%%%%%%%%%

In our scheme, the notion of computational concealing, given in Section \ref{sec:main-contribution}, is rephrased as follows.  Our quantum bit commitment scheme is {\em computationally concealing} if, for any positive polynomial $p$, there is no polynomial-time quantum algorithm that outputs $a$ from $\rho_a$  with error probability at least $1/2+1/p(n)$ for any $n\in\nat^{+}$.

%%%%%%%%%%%

We will show that our protocol achieves the above computational concealing condition under the assumption that $\mathrm{GA}$ is hard to solve efficiently on a quantum computer.

\begin{theorem}\label{concealing-condition}
Let $n$ be an agreed security parameter. If no polynomial-time quantum algorithm solves $\mathrm{GA}$ with non-negligible probability, then our scheme satisfies the computational concealing condition.
\end{theorem}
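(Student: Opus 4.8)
The plan is to derive the concealing condition from the computational indistinguishability of the two ensembles of commitment states and then to appeal to the GA-hardness result \cite[Theorem 2.5]{KKNY12}. First I would identify precisely what Bob holds at the end of an honest committing phase. In Step~(C1) Alice picks $\pi\in K_n$ uniformly at random into $\HH_{open2}$, and by Steps~(C2)--(C4) together with the definitions of Section~\ref{sec:permutation} the register $\HH_{open1}\otimes\HH_{commit}$ ends up in the state $\ket{\Phi_a^{(\pi)}}$ while $\HH_{open2}$ keeps $\pi$. Tracing out everything Bob does not receive --- that is, tracing out $\HH_{open1}$ from $\ket{\Phi_a^{(\pi)}}$ to obtain $\rho_a^{(\pi)}(n)$ and then averaging over the unknown random $\pi$ --- shows that Bob's commitment state is $\bar{\rho}_a(n)=\frac{1}{|K_n|}\sum_{\pi\in K_n}\rho_a^{(\pi)}(n)$. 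I would then observe that $\{\bar{\rho}_0(n)\}_{n\in\nat^{+}}$ and $\{\bar{\rho}_1(n)\}_{n\in\nat^{+}}$ are efficiently generated in the sense of Section~\ref{sec:main-contribution}: a polynomial-time quantum algorithm prepares the uniform superposition $\frac{1}{\sqrt{|K_n|}}\sum_{\pi\in K_n}\ket{\pi}$ over fixed-point-free involutions (these correspond to the perfect matchings of $[n]$, so the superposition can be built register by register in polynomial time), runs $\tilde{P}_1$ to produce $\frac{1}{\sqrt{|K_n|}}\sum_{\pi\in K_n}\ket{\pi}\ket{\Phi_0^{(\pi)}}$, applies $I\otimes P_2$ when the committed bit is $1$, and traces out all registers but $\HH_{commit}$; this is exactly the honest committing phase.

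Next I would use the guessing-versus-distinguishing equivalence recorded in Lemma~\ref{distinguish-bound}. Suppose, toward a contradiction, that for some positive polynomial $p$ there is a polynomial-time quantum algorithm outputting $a$ from $\bar{\rho}_a(n)$ with success probability at least $1/2+1/p(n)$ for every $n\in\nat^{+}$. By Lemma~\ref{distinguish-bound} this yields a polynomial-time quantum algorithm distinguishing $\{\bar{\rho}_0(n)\}_n$ from $\{\bar{\rho}_1(n)\}_n$ with non-negligible advantage (of order $1/p(n)$). By \cite[Theorem 2.5]{KKNY12} such a distinguisher can be converted into a polynomial-time quantum algorithm deciding GA with non-negligible success probability, contradicting the hypothesis of the theorem. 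Hence no such guessing algorithm exists for any positive polynomial $p$, which is exactly the computational concealing condition as rephrased at the start of this section.

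The bijection between $K_n$ and the perfect matchings of $[n]$, the efficient preparation of the associated uniform superposition, and the elementary advantage arithmetic behind Lemma~\ref{distinguish-bound} are routine. The delicate point, and the main obstacle, is verifying that the ordered pair $(\{\bar{\rho}_0(n)\},\{\bar{\rho}_1(n)\})$ produced by our protocol is precisely the ensemble pair for which \cite[Theorem 2.5]{KKNY12} asserts GA-hardness --- possibly only up to the fixed, efficiently computable unitary $P_2$ relating $\rho_0^{(\pi)}$ and $\rho_1^{(\pi)}$, which preserves non-negligible indistinguishability and so is harmless. In particular one must make sure that the averaging over $\pi\in K_n$ and the choice of traced-out register used here match the normalization and purification conventions of \cite{KKNY12}.
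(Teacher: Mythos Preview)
Your approach matches the paper's: assume a hypothetical guesser for the committed bit, convert it into a distinguisher between the two commitment ensembles, and then invoke the GA-hardness of that distinction for a contradiction.

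One correction, though: you have the role of Lemma~\ref{distinguish-bound} inverted. In the paper, the guesser-to-distinguisher conversion is the elementary arithmetic carried out explicitly inside the proof of Theorem~\ref{concealing-condition} (from $\prob[\AAA(\rho_a^{(\pi)})=a]\ge 1/2+1/p(n)$ for each $a$ one immediately gets advantage $\ge 2/p(n)$), whereas Lemma~\ref{distinguish-bound} is the nontrivial step asserting that any such distinguisher would yield a GA-solver, proved via Lemma~\ref{GA-solvable} (i.e., \cite[Theorems~2.2 and~2.5]{KKNY12}). So where you write ``by Lemma~\ref{distinguish-bound} this yields a distinguisher'' you should simply say the conversion is elementary, and where you cite \cite[Theorem~2.5]{KKNY12} you should instead invoke Lemma~\ref{distinguish-bound}. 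Your extra care about averaging over $\pi$ and about efficient generation is cleaner than what the paper actually writes (it works directly with $\rho_a^{(\pi)}$ and leaves $\pi$ implicitly hidden), but it is not needed for this theorem.
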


Theorem \ref{concealing-condition} follows from the lemma below.

\begin{lemma}\label{distinguish-bound}
Let $k\in\nat^{+}$. If no polynomial-time quantum algorithm solves $\mathrm{GA}$ with error probability at least $2^{-n}$, where $n$ is the vertex set size of an input  graph, then Bob cannot distinguish between $\{\rho_{0}^{(\pi)}(n)^{\otimes k}\}_{n\in\nat^{+}}$ and $\{\rho_{1}^{(\pi)}(n)^{\otimes k}\}_{n\in\nat^{+}}$ with advantage at least $1/p(n)$ for any positive polynomial $p$.
\end{lemma}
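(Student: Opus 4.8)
The statement reduces the hardness of distinguishing the $k$-fold tensor powers $\rho_0^{(\pi)}(n)^{\otimes k}$ and $\rho_1^{(\pi)}(n)^{\otimes k}$ to the hardness of distinguishing the single copies $\rho_0^{(\pi)}(n)$ and $\rho_1^{(\pi)}(n)$, which in turn is the content of \cite[Theorem 2.5]{KKNY12} (the assumed hardness of $\mathrm{GA}$ rules out distinguishing the single copies with non-negligible advantage). The natural tool is a hybrid argument. First I would suppose, for contradiction, that there is a polynomial-time quantum algorithm $\mathcal{B}$ distinguishing $\{\rho_0^{(\pi)}(n)^{\otimes k}\}_n$ from $\{\rho_1^{(\pi)}(n)^{\otimes k}\}_n$ with advantage $1/p(n)$ for some positive polynomial $p$. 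Define the hybrid states $H_j(n) = \rho_0^{(\pi)}(n)^{\otimes j}\otimes \rho_1^{(\pi)}(n)^{\otimes(k-j)}$ for $j=0,1,\dots,k$, so $H_k = \rho_0^{\otimes k}$ and $H_0=\rho_1^{\otimes k}$. By the triangle inequality applied to $\bigl|\prob[\mathcal{B}(1^n,H_k)=1]-\prob[\mathcal{B}(1^n,H_0)=1]\bigr|\ge 1/p(n)$, there is an index $j=j(n)$ with $\bigl|\prob[\mathcal{B}(1^n,H_j)=1]-\prob[\mathcal{B}(1^n,H_{j-1})=1]\bigr|\ge 1/(k\,p(n))$.

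\textbf{Key steps.} The crux is then to convert $\mathcal{B}$ together with this gap into a single-copy distinguisher $\mathcal{A}$. On input $(1^n,\xi)$, where $\xi$ is promised to be either $\rho_0^{(\pi)}(n)$ or $\rho_1^{(\pi)}(n)$, the algorithm $\mathcal{A}$ must internally generate $j-1$ copies of $\rho_0^{(\pi)}(n)$ and $k-j$ copies of $\rho_1^{(\pi)}(n)$, insert $\xi$ in the appropriate slot, and run $\mathcal{B}$ on the resulting $k$-tuple; it outputs whatever $\mathcal{B}$ outputs. Then $\mathcal{A}(1^n,\rho_0^{(\pi)}(n))$ simulates $\mathcal{B}(1^n,H_j)$ and $\mathcal{A}(1^n,\rho_1^{(\pi)}(n))$ simulates $\mathcal{B}(1^n,H_{j-1})$, so $\mathcal{A}$ distinguishes the single copies with advantage at least $1/(k\,p(n))$, contradicting \cite[Theorem 2.5]{KKNY12} under the $\mathrm{GA}$ assumption. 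For this to yield a \emph{polynomial-time} $\mathcal{A}$, I would invoke the efficient generation of $\rho_0^{(\pi)}(n)$ and $\rho_1^{(\pi)}(n)$: Procedure $\tilde P_1$ of Section \ref{sec:permutation} produces $\ket{\Phi_0^{(\pi)}}$, hence $\rho_0^{(\pi)}(n)$ by tracing out one register, and Procedure $P_2$ converts $\ket{\Phi_0^{(\pi)}}$ to $\ket{\Phi_1^{(\pi)}}$, hence yields $\rho_1^{(\pi)}(n)$ — both in polynomial time. Since $k$ is a fixed constant, generating $k-1$ auxiliary copies costs only a constant factor, so $\mathcal{A}$ runs in polynomial time.

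\textbf{Subtleties.} One point to be careful about is that $\pi\in K_n$ is a fixed parameter shared by the two ensembles: the auxiliary copies $\mathcal{A}$ generates must use the \emph{same} $\pi$ as the challenge $\xi$. This is fine because the statement is about distinguishing two ensembles indexed by a common $\pi$; formally one should regard $\pi$ as part of the problem instance (or universally quantified), and the hybrid construction respects it since all copies, real and simulated, carry the same $\pi$. A second subtlety is the choice of index $j(n)$: it need not be computable, but this is harmless because \cite[Theorem 2.5]{KKNY12} is a non-uniform-robust impossibility statement in the sense required — alternatively, since $k$ is constant there are only $k$ choices of $j$, so by pigeonhole infinitely many $n$ share a single value $j^\ast$, and restricting to that subsequence gives a genuine uniform distinguisher $\mathcal{A}$ with advantage $1/(kp(n))$ on infinitely many $n$, which already contradicts efficient indistinguishability of the single copies. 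I expect the main obstacle to be nothing deep but rather the bookkeeping of the claim that $\mathcal{A}$'s simulation exactly reproduces the hybrid distributions (i.e.\ that tensoring independently generated copies behaves as expected under $\mathcal{B}$) and a clean statement of how the $\mathrm{GA}$ assumption feeds through \cite[Theorem 2.5]{KKNY12}; the quantitative loss is only a factor of $k$, well within "non-negligible," so the error-probability bound $2^{-n}$ in the hypothesis is comfortably preserved.
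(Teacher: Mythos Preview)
Your hybrid argument has a genuine gap at the point you yourself flag as a ``subtlety.'' The permutation $\pi$ is \emph{hidden} from the distinguisher: in the definition of $k\mbox{-}\mathrm{QSCD_{ff}}$ the instance is $(1^n,\rho^{\otimes k})$ with $\pi$ explicitly ``fixed (but hidden),'' and in the intended application Bob receives $\rho_a^{(\pi)}$ without learning Alice's secret key $\pi$. Your single-copy distinguisher $\mathcal{A}$ must therefore manufacture $k-1$ auxiliary copies of $\rho_0^{(\pi)}$ and $\rho_1^{(\pi)}$ for the \emph{same} unknown $\pi$ as the challenge $\xi$. But Procedure~$\tilde P_1$ takes $\ket{\pi}$ as input, so it cannot be run without knowing $\pi$; and if $\pi$ were part of the instance, the whole problem would be trivial (the state partition algorithm $C_{SPA}$ of Section~\ref{sec:partition} distinguishes $\rho_0^{(\pi)}$ from $\rho_1^{(\pi)}$ perfectly once $\pi$ is known). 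So the step ``$\mathcal{A}$ internally generates $j-1$ copies of $\rho_0^{(\pi)}(n)$ and $k-j$ copies of $\rho_1^{(\pi)}(n)$'' cannot be carried out, and the hybrid reduction collapses.

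The paper's proof avoids this entirely: it does not reduce $k$ copies to one copy. Instead it observes that a distinguisher for $\{\rho_0^{(\pi)}(n)^{\otimes k}\}$ versus $\{\rho_1^{(\pi)}(n)^{\otimes k}\}$ is, by definition, a solver for $k\mbox{-}\mathrm{QSCD_{ff}}$ with non-negligible advantage, and then invokes Lemma~\ref{GA-solvable} (which packages \cite[Theorems~2.2 and~2.5]{KKNY12}) to conclude directly that $\mathrm{GA}$ is solvable with error at most $2e^{-n}\le 2^{-n}$. The $k$-copy case is handled inside the cited black box; no hybrid is needed, and the hidden-$\pi$ obstacle never arises.
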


Before proving Lemma \ref{distinguish-bound}, we give the proof of Theorem \ref{concealing-condition}.

\begin{proofof}{Theorem \ref{concealing-condition}}
Assume that there is an efficient quantum algorithm $\AAA$ that produces $a$ from $\rho_{a}^{(\pi)}$ with probability at least $1/2+1/p(n)$ for a certain fixed positive polynomial $p$. Let us consider the following quantum algorithm $\BB$: on input $\chi\in\{\rho_{0}^{(\pi)},\rho_{1}^{(\pi)}\}$, run $\AAA$ and obtain a bit, say, $a$.
By the property of $\AAA$, it follows that $\prob[\BB(1^n,\rho_{a}^{(\pi)})=a] \geq 1/2+1/p(n)$ for each bit $a\in\{0,1\}$. Thus, we obtain  $|\prob[\BB(1^n,\rho_{0}^{(\pi)})=1] - \prob[\BB(1^n,\rho_{1}^{(\pi)})=1]| \geq 2/p(n)$.  Hence, we can distinguish between  $\rho_0^{(\pi)}$ and $\rho_1^{(\pi)}$ with advantage at least $2/p(n)$. By Lemma \ref{distinguish-bound}, we conclude that $\mathrm{GA}$ is polynomial-time solvable on a quantum computer with non-negligible probability.
\end{proofof}

To prove Lemma \ref{distinguish-bound}, we recall the computational distinction problem $\mathrm{QSCD_{ff}}$ introduced by Kawachi \etalc~\cite{KKNY12}. Since we need only a restricted form of this problem, we re-formulate this problem in the following fashion. Let $k$ be a fixed constant in $\nat^{+}$.

\ms
\n\hs{3}{\sc $k$-Quantum State Computational Distinction Problem} $k\mbox{-}\mathrm{QSCD_{ff}}$ (weaker version):
\renewcommand{\labelitemi}{$\circ$}
\begin{itemize}\vs{-1}
  \setlength{\topsep}{-2mm}%
  \setlength{\itemsep}{1mm}%
  \setlength{\parskip}{0cm}%

\item {\sc Instance:} a string $1^n$ and a $k$-fold quantum state $\rho^{\otimes k}$ with $\rho\in \{ \rho_{0}^{(\pi)}(n), \rho_{1}^{(\pi)}(n) \}$ for a certain fixed (but hidden) permutation $\pi\in K_n$, depending only on $n$, where $n\in\nat^{+}$.

\item {\sc Output:} YES, if $\rho = \rho_{0}^{(\pi)}(n)$; NO, otherwise.
\end{itemize}

For convenience, we say that a quantum algorithm $\AAA$ {\em solves $k\mbox{-}\mathrm{QSCD_{ff}}$ with advantage} $p(n)$ if $\AAA$ distinguishes between $\{\rho_{0}^{(\pi)}(n)^{\otimes k}\}_{n\in\nat^{+}}$ and $\{\rho_{1}^{(\pi)}(n)^{\otimes k}\}_{n\in\nat^{+}}$ with advantage $p(n)$.
Moreover, we say that a quantum algorithm $\AAA$ solves $k\mbox{-}\mathrm{QSCD_{ff}}$ with {\em average advantage} $p$ on length $n$
if the average, over all $\pi\in K_n$ chosen uniformly at random, of the advantage with which $\AAA$ distinguishes between $\{\rho_{0}^{(\pi)}(n)^{\otimes k}\}_{n\in\nat^{+}}$ and $\{\rho_{1}^{(\pi)}(n)^{\otimes k}\}_{n\in\nat^{+}}$ is exactly $p$.
We note that, by
combining \cite[Theorem 2.2]{KKNY12} and \cite[Theorem 2.5]{KKNY12}, the following holds.

\begin{lemma}\label{GA-solvable}{\rm \cite{KKNY12}}
Let $k\in\nat^{+}$. If a quantum algorithm $\AAA$ solves $k\mbox{-}\mathrm{QSCD_{ff}}$ with average advantage at least $1/p(n)$ for a certain positive polynomial $p$, then there exists a quantum algorithm that solves $\mathrm{GA}$ for infinitely-many lengths with probability at least $1-2e^{-n}$, where $e$ is the base of natural logarithms and $n$ refers to the vertex set size of an input graph of $\mathrm{GA}$.
\end{lemma}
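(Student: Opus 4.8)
The plan is to obtain the lemma as the composition of the two cited ingredients of \cite{KKNY12}: an amplification/random self-reducibility result, \cite[Theorem 2.2]{KKNY12}, and a worst-case reduction from $\mathrm{GA}$ to the distinction problem, \cite[Theorem 2.5]{KKNY12}. First I would fix the hypothesized distinguisher $\AAA$, which by assumption separates $\{\rho_{0}^{(\pi)}(n)^{\otimes k}\}_{n}$ from $\{\rho_{1}^{(\pi)}(n)^{\otimes k}\}_{n}$ with advantage at least $1/p(n)$ \emph{on average} over $\pi\in K_n$, and upgrade it into a high-confidence solver for a single, adversarially fixed hidden permutation.

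The structural fact driving the amplification is that $K_n$ (the fixed-point-free involutions of $[n]$, with $n$ even) is a single conjugacy class of $S_n$, so conjugating the hidden $\pi$ by a uniformly random $\tau\in S_n$ re-randomizes it to a uniform element of $K_n$. Concretely, the unitary that right-multiplies each label, $\ket{\sigma}\mapsto\ket{\sigma\tau}$, is a known operation applicable without knowing $\pi$ or the committed bit; writing $\sigma'=\sigma\tau$ one checks against the definition of $\ket{\phi_{\sigma,s}^{(\pi)}}$ that it sends $\rho_{b}^{(\pi)}$ to $\rho_{b}^{(\tau^{-1}\pi\tau)}$, copy by copy, hence $\rho_{b}^{(\pi)\otimes k}$ to $\rho_{b}^{(\tau^{-1}\pi\tau)\otimes k}$, with the bit $b$ preserved. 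I would verify this equivariance and its compatibility with the bit-flipping transform $P_2$ of Section \ref{sec:permutation}, which localizes the average advantage onto any fixed $\pi$: applying a fresh random $\tau_i$ to each copy turns an instance with a specific hidden $\pi$ into an independent, uniformly random member of $K_n$ on which $\AAA$ has advantage $\geq 1/p(n)$ in expectation. Running this re-randomize-then-$\AAA$ routine $m=\Theta(n\,p(n)^2)$ times and taking a majority vote then sharpens the advantage to correctness probability at least $1-2e^{-n}$ by a Chernoff estimate; this is the content I extract from \cite[Theorem 2.2]{KKNY12}.

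Next I would invoke \cite[Theorem 2.5]{KKNY12} to encode a $\mathrm{GA}$ instance. Given an undirected graph $G$ on an $n$-element vertex set, the reduction efficiently prepares copies of a state equal to $\rho_{0}^{(\pi)}$ exactly when $G$ has the relevant nontrivial (fixed-point-free-involution-type) automorphism and equal to $\rho_{1}^{(\pi)}$ otherwise, where $\pi$ is fixed by the construction. Because $G$ is given explicitly, the reducer can manufacture as many independent copies as the amplification demands, so feeding these into the amplified distinguisher decides $\mathrm{GA}$ with probability at least $1-2e^{-n}$. The restriction to infinitely many lengths is the final bookkeeping point: the reduction maps a size-$n$ graph to distinction parameters of a related size $m(n)$, so the polynomial average-advantage hypothesis transfers to a $\mathrm{GA}$ guarantee only along the infinite subsequence of lengths in the range of $m(\cdot)$.

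I expect the main obstacle to be the amplification step, specifically establishing the conjugation equivariance $\rho_{b}^{(\pi)}\mapsto\rho_{b}^{(\tau^{-1}\pi\tau)}$ and confirming that the re-randomized copies produced from a single fixed $\pi$ are genuinely independent and uniform over $K_n$, so that a \emph{merely average} advantage can legitimately be concentrated onto the specific $\pi$ arising from $G$ and then boosted by repetition. Once the equivariance, the bit-preservation, and the independence of the manufactured copies are secured, the remaining Chernoff estimate and the transfer of parameters from $k\mbox{-}\mathrm{QSCD_{ff}}$ back to $\mathrm{GA}$ are routine.
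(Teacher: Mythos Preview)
The paper does not give its own proof of this lemma: it is stated as a direct consequence of \cite[Theorem 2.2]{KKNY12} and \cite[Theorem 2.5]{KKNY12}, with no further argument. Your proposal correctly identifies these two ingredients and sketches how they combine (random self-reducibility via conjugation in $S_n$ to pass from average to worst case, Chernoff amplification, then the worst-case $\mathrm{GA}$ reduction), which is exactly the structure the paper invokes.

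One caution on your description of \cite[Theorem 2.5]{KKNY12}: the reduction there does not work by producing a state that is $\rho_0^{(\pi)}$ or $\rho_1^{(\pi)}$ according to whether $G$ has a fixed-point-free involutive automorphism. Rather, it passes through a sequence of intermediate problems (essentially a distinguisher-to-hidden-shift/search step, then a search-to-decision reduction for graph automorphism), so the mapping from a $\mathrm{GA}$ instance to distinction instances is less direct than your sketch suggests. This does not affect the correctness of your overall plan, since you are ultimately just citing the theorem as a black box, but you should not commit to that specific mechanism when writing it up.
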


At last, we return to the proof of Lemma \ref{distinguish-bound}.

\begin{proofof}{Lemma \ref{distinguish-bound}}
We will show the contrapositive of the lemma. Let $k\in\nat^{+}$.
Assume that there are a positive polynomial $p$ and a polynomial-time quantum algorithm $\AAA$ that distinguishes between $\rho_{0}^{(\pi)}(n)^{\otimes k}$ and $\rho_{0}^{(\pi)}(n)^{\otimes k}$ with advantage at least $1/p(n)$.
In other words, we can solve $k\mbox{-}\mathrm{QSCD_{ff}}$ in polynomial time with advantage at least $1/p(n)$. Lemma \ref{GA-solvable} therefore implies that $\mathrm{GA}$ is solvable for infinitely-many input lengths $n$ on an appropriate quantum computer in polynomial time with error probability at most $2e^{-n}$, which is bounded from above by $2^{-n}$.
\end{proofof}

%%%%%
\subsection{Statistically Binding Condition}\label{sec:binding-condition}

The binding condition for classical bit commitment requires that adversarial Alice cannot cheat Bob simply by revealing a different bit $a'$ together with a different key $\pi'$ to Bob. For quantum bit commitment,
Dumais \etalc~\cite{DMS00} formulated a condition for a quantum bit commitment scheme to be {\em statistically binding} in the case of non-interactive schemes. Other definitions for binding condition are found in, \eg \cite{DFR+07}.

Conventionally, we say that Alice {\em unveils} $a$ (with probability $p$) if, in the opening phase,  Bob observes $a$ and convinces himself that this is truly a committed bit (with probability $p$).

Here, we cope with a general adversary model, proposed in \cite{DMS00}, which describes adversarial Alice's attack $\UU$ as a triplet $(U_1,U_2^{(0)},U_2^{(1)})$ of unitary operators.

\ms

\sloppy
(1) At the beginning of the committing phase, adversarial Alice starts with the initial state $\ket{0}$ in her system $\HH_{A}^{(C1)} = \HH_{A,private}\otimes \HH_{bit}\otimes \HH_{open} \otimes \HH_{commit}$.
Instead of taking Steps (C1)--(C3), she applies  the unitary operator $U_1$ to $\ket{0}$  in $\HH_{A}^{(C1)}$ and generates a quantum state  $\ket{\eta^{(C1)}} = U_1\ket{0} = \sum_{a\in\{0,1\}} \sum_{\sigma\in S_n} \sum_{\pi\in K_n} \ket{\xi_{a,\pi,\sigma}}\ket{a} \ket{\pi}\ket{\sigma} \ket{\gamma_{a,\pi,\sigma}}$ with $\sum_{a,\sigma,\pi}\|\ket{\xi_{a,\pi,\sigma}}\|^2=1$.
Since $\ket{\gamma_{a,\pi,\sigma}}\in \HH_{S_n}=\mathrm{span}\{\ket{\phi_{\tau,s}^{(\pi)}}\mid \tau\in S_n^{(\pi)},s\in\{0,1\}\}$, it holds that $\ket{\gamma_{a,\pi,\sigma}} = \sum_{\tau,s}\alpha_{\tau,s}^{(a,\pi,\sigma)}
\ket{\phi_{\tau,s}^{(\pi)}}$ for an appropriate set  $\{\alpha_{\tau,s}^{(a,\pi,\sigma)}\}_{\tau,s}$.
Hence, we obtain
\begin{equation}\label{eqn:state-in-C1}
\ket{\eta^{(C1)}} = \sum_{\pi\in K_n}\sum_{\sigma,\tau\in S_{n}^{(\pi)}}\sum_{a,s\in\{0,1\}} \ket{\xi_{a,\pi,\sigma}^{(\tau,s)}} \ket{a} \ket{\pi} \ket{\sigma} \ket{\phi_{\tau,s}^{(\pi)}},
\end{equation}
where $\ket{\xi_{a,\pi,\sigma}^{(\tau,s)}} = \alpha_{\tau,s}^{(a,\pi,\sigma)}\ket{\xi_{a,\pi,\sigma}}$.
At Step (C4), she sends the system $\HH_{commit}$ to Bob.
The quantum state that Bob receives from adversarial Alice is
of the form  $\chi=\sum_{a,s,\sigma,\tau,\pi}
\|\ket{\xi_{a,\pi,\sigma}^{(\tau,s)}}\|^2  \density{\phi_{\tau,s}^{(\pi)}}{\phi_{\tau,s}^{(\pi)}}$ instead of $\frac{1}{|S_n|}\sum_{\sigma}\density{\phi_{\sigma,a}^{(\pi)}}{\phi_{\sigma,a}^{(\pi)}}$.

(2) At the beginning of the opening phase, recall that Alice owns the subsystem  $\HH_{A}^{(R1)} = \HH_{A,private}\otimes \HH_{bit}\otimes \HH_{open}$. Before sending $\HH_{bit}\otimes \HH_{open}$ to Bob, Alice applies one of the unitary operators, $U_2^{(0)}$ and $U_2^{(1)}$, in an attempt to unveil $0$ and $1$, respectively.
Assume that Alice chooses a bit $a\in\{0,1\}$ and Alice tries to apply  $U_2^{(a)}$ in order to maximize the probability of unveiling $a$.
Now, she applies $U_2^{(a)}\otimes I$ to $\ket{\eta^{(C1)}}$, where $U_2^{(a)}$ acts on $\HH_{A}^{(R1)}$ and $I$ acts on $\HH_{commit}$, and then obtains
$
\ket{\eta^{(R1)}} = (U_2^{(a)}\times I)\ket{\eta^{(C1)}}
$
in $\HH_{A}^{(R1)}\otimes \HH_{commit}$.
Given Alice's cheating strategy $\UU=(U_1,U_2^{(0)},U_2^{(1)})$, for each bit $a$, recall that $T_a^{(\UU)}(n)$ denotes the probability that, when Alice applies $(U_1,U_2^{(a)})$ as described above, Bob obtains $a$ by the projection measurement in Step (R2) (with ignoring the values of $\pi$ and $\sigma$) and then accepts Alice's bit $a$ through Steps (R3)--(R5) (in other words, Alice successfully unveils $a$). Note that $0\leq S_0(n)+S_1(n)\leq 2$.

The opening phase of our scheme, assuming that Bob honestly follows Steps (R2)--(R5), is in essence equivalent to the process that Bob immediately measures the system $\HH_{bit}\otimes \HH_{open}\otimes \HH_{commit}$ in state  $\ket{a}\ket{\pi}\ket{\Phi_{a}^{(\pi)}}$ for any $\pi\in K_n$.
To be more precise, we need to introduce four measurement operators $M_{bit}^{(a)}$, $M_{open1}^{(\pi)}$, $M_{open2}^{(\sigma)}$, and $M_{commit}^{(a,\pi,\sigma)}$ acting on $\HH_{bit}$, $\HH_{open1}$, $\HH_{open2}$, and $\HH_{commit}$ that project onto states $\ket{a}$,  $\ket{\pi}$, $\ket{\sigma}$, and $\ket{\phi_{\sigma,a}^{(\pi)}}$, respectively. For convenience, we set  $M_{open}^{(\pi,\sigma)}\equiv M_{open1}^{(\pi)}\otimes M_{open2}^{(\sigma)}$. Let  $M_{mix}^{(a,\pi)}$ be a measurement operator acting on $\HH_{open2}\otimes \HH_{commit}$ projecting onto state $\ket{\Phi_{a}^{(\pi)}}$. Finally, we define $M_a \equiv \sum_{\pi\in K_n} M_{bit}^{(a)}\otimes M_{open1}^{(\pi)}\otimes M_{mix}^{(a,\pi)}$ for each $a\in\{0,1\}$.
In our argument that follows shortly, we assume that, instead of Steps (R2)--(R5), Bob simply performs $M_a$ in the system $\HH_{bit}\otimes \HH_{open}\otimes \HH_{commit}$.  Note that
the value $T_a^{(\UU)}(n)$ associated with Alice's cheating strategy $\UU$  coincides with $\| (I\otimes M_s)(U_2^{(s)}\otimes I)\ket{\eta^{(C1)}} \|^2$.

%%%%%%%%%%%%%%

Let us recall from Section \ref{sec:main-contribution} the notion of statistical binding.
We rephrase this notion as follows. Our quantum bit commitment scheme is
{\em statistically binding} if there exists a negligible function $\varepsilon(n)$ such that, for any cheating strategy  $\UU=(U_1,U_2^{(0)},U_2^{(1)})$ of Alice, $T_0^{(\UU)}(n)+T_1^{(\UU)}(n)\leq 1+  \varepsilon(n)$ holds for every length $n\in\nat^{+}$.

%%%%%%%%%%%%%%

We will show that our protocol achieves the statistical binding condition.

\begin{theorem}\label{binding-result}
If no polynomial-time quantum algorithm solves $\mathrm{GA}$ with non-negligible probability, then our quantum bit commitment scheme is statistically binding.
\end{theorem}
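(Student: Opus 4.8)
The plan is to show that \emph{unconditionally} (i.e., with no complexity assumption at all) the quantum state sent in the committing phase together with the honest opening procedure forces $T_0^{(\UU)}(n)+T_1^{(\UU)}(n) \le 1 + \varepsilon(n)$ for a negligible $\varepsilon$; the hardness of GA is actually needed only for concealing, so the ``if'' hypothesis in the statement will be carried along harmlessly. Concretely, starting from Eq.~(\ref{eqn:state-in-C1}), write Alice's post-$U_1$ state as $\ket{\eta^{(C1)}}$ and recall that $T_a^{(\UU)}(n) = \| (I\otimes M_a)(U_2^{(a)}\otimes I)\ket{\eta^{(C1)}} \|^2$, where $M_a = \sum_{\pi\in K_n} M_{bit}^{(a)}\otimes M_{open1}^{(\pi)}\otimes M_{mix}^{(a,\pi)}$. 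Since $U_2^{(a)}$ is unitary and acts only on $\HH_A^{(R1)}$, we have $T_a^{(\UU)}(n) \le \| \Pi_a \ket{\eta^{(C1)}}\|^2$ maximized over all unitaries $U_2^{(a)}$, where $\Pi_a$ is the projector onto the subspace $\mathrm{span}\{\ket{\psi}\ket{a}\ket{\pi}\ket{\Phi_a^{(\pi)}} : \ket{\psi}\in\HH_{A,private},\ \pi\in K_n\}$ of $\HH_{all}$ (here I am folding $\HH_{bit}$ appropriately). The key point is that $U_2^{(a)}$ cannot touch $\HH_{commit}$, so the best Alice can do is rotate her private/bit/open registers to have maximal overlap with the ``ideal'' opening states, while the $\HH_{commit}$ part is frozen.

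The heart of the argument is then a Jordan/two-projection analysis: bound $T_0^{(\UU)}(n)+T_1^{(\UU)}(n)$ by $1 + \|\Pi_0 \Pi_0'\| + \text{(something small)}$, where $\Pi_a'$ is the projector obtained from $\Pi_a$ after allowing the optimal $U_2^{(a)}$, and show that the relevant ``cross term'' is governed by $\|N_0 N_1\|$ where $N_a$ projects $\HH_{open}\otimes\HH_{commit}$ onto $\mathrm{span}\{\ket{\pi}\ket{\Phi_a^{(\pi)}}:\pi\in K_n\}$. By a standard argument (e.g.\ the reduction from Dumais--Mayers--Salvail~\cite{DMS00}), $T_0^{(\UU)}(n)+T_1^{(\UU)}(n) \le 1 + 2\,\delta(n)$ where $\delta(n)$ measures how well an adversary, given only the reduced state on $\HH_{commit}$, can pass the state-partition test $C_{SPA}$ for \emph{both} values of the partitioning bit simultaneously. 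This is where Section~\ref{sec:partition} enters: the analysis there shows that for the honest key $\pi$ the algorithm $C_{SPA}$ deterministically splits $\chi$ into $\chi_0$ on outcome $0$ and $\chi_1$ on outcome $1$, while for any $\kappa \ne \pi$ the outcomes $0$ and $1$ each occur with probability exactly $1/2$ regardless of the input state in $\HH_{S_n}$. Hence Alice's ability to make Step (R4) succeed with outcome $a$ while having committed the orthogonal state is controlled by how much of her $\HH_{commit}$-content can lie in the ``wrong'' branch, which forces the two success probabilities to trade off.

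The main obstacle — and the step that needs the most care — is handling the fact that Alice's opening operators $U_2^{(0)}, U_2^{(1)}$ act on the \emph{same} system $\HH_A^{(R1)}$ but are applied in mutually exclusive runs, so one must be careful not to ``add'' the two optimal rotations incoherently; the correct statement compares $T_0 + T_1$ to the operator norm of a product of two projections acting on a common space, and the bound comes out as $T_0^{(\UU)}(n)+T_1^{(\UU)}(n) \le 1 + \|N_0 N_1\| + \|N_0 N_1\|$ up to lower-order terms. So the technical crux reduces to the spectral estimate
\[
\|N_0 N_1\| = \max_{\ket{u}\in\mathrm{ran}(N_0),\,\ket{v}\in\mathrm{ran}(N_1)} |\measure{u}{v}| \le \varepsilon(n)/2,
\]
i.e.\ to showing that the span of $\{\ket{\pi}\ket{\Phi_0^{(\pi)}}:\pi\in K_n\}$ is very nearly orthogonal to the span of $\{\ket{\pi}\ket{\Phi_1^{(\pi)}}:\pi\in K_n\}$. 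For matching keys this is exact orthogonality by Lemma~\ref{base-phi-1}(2); for distinct keys $\pi\ne\kappa$ the overlaps are the $\pm\frac12$ entries of Lemma~\ref{base-phi-1}(4)--(5), but there are only $|K_n|$ of them against a normalization of size $|S_n| = n!$, so after summing, $\|N_0 N_1\|$ is bounded by something like $|K_n|/\sqrt{|S_n|}$, which is negligible since $|K_n| = (n-1)!! \ll \sqrt{n!}$ for our choice of $n$ (even, $n/2$ odd). I would also invoke Lemma~\ref{Phi-rewritten} to re-express the $s=1$ states in terms of $s=0$ states across all keys, which makes the cross-overlaps transparent. Putting the pieces together yields $\varepsilon(n) = O(|K_n|/\sqrt{|S_n|})$, which is negligible, completing the proof.
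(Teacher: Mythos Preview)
Your approach diverges fundamentally from the paper's and contains a genuine gap. The paper does \emph{not} prove binding unconditionally; it proves the contrapositive by a computational reduction. Assuming a cheating strategy $\UU$ with $T_0^{(\UU)}(n)+T_1^{(\UU)}(n)\geq 1+1/p(n)$, the paper builds (Lemma~\ref{HPSP-S0-S1-bound}, proved in Section~\ref{sec:algorithm-HPSP}) a quantum algorithm for $\mathrm{HPSP}$ that runs Alice's unitaries $U_1,U_2^{(0)}$ as black-box subroutines, then chains this through $2\mbox{-}\mathrm{QSCD_{ff}}$ (Lemma~\ref{HSP-vs-QSCDff}) and Lemma~\ref{GA-solvable} to solve $\mathrm{GA}$. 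So the hardness of $\mathrm{GA}$ is used essentially in the binding argument, contrary to your opening claim.

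The specific flaw in your spectral argument is that the projectors $N_a$ you define act on $\HH_{open}\otimes\HH_{commit}$, but $\HH_{open}$ (both the $\ket{\pi}$ register and the $\ket{\sigma}$ register) lies inside $\HH_A^{(R1)}$ and is therefore fully under Alice's control via $U_2^{(a)}$. A bound on $\|N_0N_1\|$ thus says nothing about $T_0+T_1$: Alice applies $U_2^{(0)}$ to rotate into the range of $N_0$ and, in a separate run, $U_2^{(1)}$ to rotate into the range of $N_1$; these ranges being nearly orthogonal is irrelevant. Indeed, with your definition one has $N_0N_1=0$ \emph{exactly} (the $\ket{\pi}$ factor already orthogonalizes distinct keys, and for equal keys $\measure{\Phi_0^{(\pi)}}{\Phi_1^{(\pi)}}=0$ by Lemma~\ref{base-phi-1}(2)), so your argument would yield perfect binding, which should have been a warning sign. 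The only register Alice cannot touch is $\HH_{commit}$, and there the supports of $\rho_0^{(\pi_0)}$ and $\rho_1^{(\pi_1)}$ for $\pi_0\neq\pi_1$ overlap substantially (the $\pm\tfrac12$ entries in Lemma~\ref{base-phi-1}(5) are \emph{per $\sigma$}, not in total, so they do not wash out against $|S_n|$). Controlling Alice's ability to exploit this overlap by switching keys is precisely what forces the paper into the $\mathrm{HPSP}$ reduction.
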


To prove this theorem, we first introduce a new problem, called the {\em hidden permutation search problem}, which is closely related to the indistinguishability between $\rho_0^{(\pi)}$ and $\rho_1^{(\pi)}$.

\ms
\n\hs{3}{\sc Hidden Permutation Search Problem} $\mathrm{HPSP}$:
\renewcommand{\labelitemi}{$\circ$}
\begin{itemize}\vs{-1}
  \setlength{\topsep}{-2mm}%
  \setlength{\itemsep}{1mm}%
  \setlength{\parskip}{0cm}%

\item {\sc Instance:} a string $1^n$ with $n\in\nat^{+}$ and a quantum state $\rho_{0}^{(\pi)}(n)$ with a certain hidden permutation $\pi\in K_n$.

\item {\sc Output:} $\pi$.
\end{itemize}

We say that a quantum algorithm $\AAA$ {\em solves $\mathrm{HPSP}$ with average probability} $p$ on length $n$ if, over all permutations $\pi\in K_n$ chosen uniformly at random, $\AAA$ takes instance $(1^n,\rho_{0}^{(\pi)}(n))$ and outputs $\pi$ with probability exactly $p$.

\begin{lemma}\label{HPSP-S0-S1-bound}
Assume that there exist a cheating strategy $\UU$ of Alice and a positive polynomial $p$ satisfying $T_0^{(\UU)}(n)+T_1^{(\UU)}(n)\geq 1+1/p(n)$ for infinitely-many lengths $n$. Then, there exist a positive polynomial $q$ and a polynomial-time quantum algorithm that solves $\mathrm{HPSP}$ with average probability at least $1/q(n)$ for infinitely-many lengths $n$.
\end{lemma}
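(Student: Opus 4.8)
The plan is to show the contrapositive in spirit: assuming a cheating strategy $\UU=(U_1,U_2^{(0)},U_2^{(1)})$ with $T_0^{(\UU)}(n)+T_1^{(\UU)}(n)\geq 1+1/p(n)$ for infinitely many $n$, I will extract from $\UU$ a polynomial-time quantum algorithm that recovers the hidden permutation $\pi$ with non-negligible average probability. First I would write out, using Eq.~(\ref{eqn:state-in-C1}), the post-committing joint state $\ket{\eta^{(C1)}}$ together with the measurement operators $M_0, M_1$ introduced above, so that $T_a^{(\UU)}(n) = \|(I\otimes M_a)(U_2^{(a)}\otimes I)\ket{\eta^{(C1)}}\|^2$. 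The key structural fact to exploit is that $M_0$ and $M_1$ project onto states built from $\ket{\Phi_0^{(\pi)}}$ and $\ket{\Phi_1^{(\pi)}}$ respectively, and by Lemma \ref{base-phi-1}(2) together with Procedure 2 (the map $P_2$ sending $\ket{\phi_{\sigma,s}^{(\pi)}}$ to $\ket{\phi_{\sigma,1-s}^{(\pi)}}$, hence $\ket{\Phi_s^{(\pi)}}$ to $\ket{\Phi_{1-s}^{(\pi)}}$), the states $\ket{\Phi_0^{(\pi)}}$ and $\ket{\Phi_1^{(\pi)}}$ are orthogonal for each fixed $\pi$. The standard non-interactive binding argument (as in \cite{DMS00}) then shows that, in the absence of any extra information, $T_0+T_1$ cannot exceed $1$ by more than a negligible amount unless the commitment state $\chi$ that Bob holds carries recoverable information about $\pi$ — because the "success projectors" for opening to $0$ and to $1$ live in (nearly) orthogonal subspaces of $\HH_{commit}$ determined by $\pi$.

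The concrete reduction I would build is: given an instance $(1^n,\rho_0^{(\pi)}(n))$ of $\mathrm{HPSP}$, simulate the honest committing phase by a \emph{purified} version — prepare $\ket{\Phi_0^{(\pi)}}$ (this is possible since the instance is literally $\rho_0^{(\pi)}$, or rather its purification $\ket{\Phi_0^{(\pi)}}$ can be generated by $\tilde P_1$ once $\pi$ is known; but $\pi$ is precisely what we are searching for, so instead I would run Alice's own $U_1$ to produce $\ket{\eta^{(C1)}}$, which internally generates a superposition over $\pi$), then apply $U_2^{(a)}$ for the bit $a$ achieving the larger of $T_0,T_1$, and measure the $\HH_{open2}$ register in the computational basis $S_n$. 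The claim is that the outcome equals the hidden $\pi$ with probability that is at least polynomially related to $T_0+T_1-1$. The mechanism is that whenever Alice's strategy succeeds in \emph{both} openings from a correlated part of the superposition, that part must be (approximately) supported on a single, honest-looking $\ket{\pi}\ket{\Phi_a^{(\pi)}}$ for a \emph{specific} $\pi$ forced by the $M_{mix}^{(a,\pi)}$ projectors; the gap $1/p(n)$ lower-bounds the amplitude mass on such correlated components, hence lower-bounds the probability of reading off a correct $\pi$ from $\HH_{open2}$ after the measurement in Step (R2).

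The main obstacle — and the step I expect to occupy most of the real work — is quantifying this last claim: controlling how the "overlap excess" $T_0^{(\UU)}(n)+T_1^{(\UU)}(n)-1$ translates into a lower bound on the probability that the $\HH_{open2}$ measurement yields the true $\pi$. The difficulty is that $\ket{\eta^{(C1)}}$ is a superposition over many $\pi\in K_n$ (there are exponentially many), and Alice's $U_2^{(0)}$, $U_2^{(1)}$ act globally; one must argue that the success amplitudes for the two openings, when both large, are forced to align on a common $\pi$-component because $M_0$ and $M_1$ for \emph{different} $\pi$ are orthogonal while for the \emph{same} $\pi$ they are orthogonal too (so the only way to gain $1/p(n)$ over the trivial bound $1$ is genuine extractable structure). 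I would handle this by a Cauchy–Schwarz / Jordan's-lemma style decomposition of $\ket{\eta^{(C1)}}$ along the pair of success subspaces: write $\ket{\eta^{(C1)}}$ in terms of the principal angles between $\mathrm{range}(I\otimes M_0)(U_2^{(0)}\otimes I)$ and $\mathrm{range}(I\otimes M_1)(U_2^{(1)}\otimes I)$, show $T_0+T_1-1$ is bounded above by (a constant times) the weight on small-angle components, and then verify that each small-angle component pins down $\pi$ in $\HH_{open2}$ because in the honest execution $\ket{\Phi_a^{(\pi)}}$ determines $\pi$. Averaging over $\pi\in K_n$ and using that the gap holds for infinitely many $n$ then yields the desired polynomial $q$ and the polynomial-time algorithm solving $\mathrm{HPSP}$ with average probability at least $1/q(n)$ on those lengths.
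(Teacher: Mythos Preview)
Your reduction has a genuine gap: the algorithm you describe never touches the HPSP instance $\rho_0^{(\pi')}$. You propose to run Alice's $U_1$ to produce $\ket{\eta^{(C1)}}$, apply $U_2^{(a)}$, and read off $\HH_{open2}$. But this procedure is entirely independent of the hidden $\pi'$; whatever distribution it induces on $K_n$ is the same for every instance, so its average success probability over a uniformly random $\pi'\in K_n$ is exactly $1/|K_n|$, which is negligible. The sentence ``the outcome equals the hidden $\pi$'' conflates the $\pi$'s appearing in Alice's internal superposition with the challenge permutation $\pi'$ supplied by the HPSP instance---these are unrelated unless you explicitly couple them.

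The paper's proof closes precisely this gap by \emph{splicing} the instance into Alice's state. After producing $\ket{\eta^{(C1)}}$ and distilling its $M_1$-component $\ket{\eta^{(C1)}_{perf}}=\sum_\pi\ket{\xi_{1,\pi}}\ket{1}\ket{\pi}\ket{\Phi_1^{(\pi)}}$, one uses $P_1^{-1}$ and $P_2$ to strip the commit register down to $\ket{\mathrm{id}}$, swaps in the purification register of the given $\rho_0^{(\pi')}$, and rebuilds. The algebraic point (Lemma~\ref{Phi-rewritten} and the computation of $\ket{\eta^{(\pi')}}$) is that the resulting state is proportional to $(I\otimes\tilde M_{\pi'})\ket{\eta^{(C1)}_{perf}}$, and since $M_0$ factors through $\tilde M_{\pi'}$ one gets $(I\otimes M_0)(U_2^{(0)}\otimes I)\ket{\eta^{(\pi')}}=(I\otimes M_0)(U_2^{(0)}\otimes I)\ket{\eta^{(C1)}_{perf}}$. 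Thus applying $U_2^{(0)}$ and measuring now yields $\pi'$ with probability governed by $\|(I\otimes M_0)(U_2^{(0)}\otimes I)(I\otimes M_1)\ket{\eta^{(C1)}}\|^2$, which is bounded below by $(\sqrt{T_0}-\sqrt{1-T_1})^2\geq 1/4p(n)^2$ via the DMS-style inequality you alluded to. Your Jordan's-lemma sketch is aimed at this last quantitative step, which is fine, but it cannot substitute for the missing mechanism that injects $\pi'$ into the computation.
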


Because the proof of Lemma \ref{HPSP-S0-S1-bound} requires a special treatment, we will give it in the next section.

\begin{lemma}\label{HSP-vs-QSCDff}
If there are a positive polynomial $p$ and a polynomial-time quantum algorithm that solves $\mathrm{HPSP}$ with average probability at least $1/p(n)$ for infinitely-many $n$, then there are a positive polynomial $q$ and a polynomial-time quantum algorithm that solves $2\mbox{-}\mathrm{QSCD_{ff}}$ with average advantage at least $1/q(n)$ for infinitely-many lengths $n$.
\end{lemma}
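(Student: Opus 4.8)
The plan is to show that an algorithm recovering the hidden permutation $\pi$ from a single copy of $\rho_0^{(\pi)}(n)$ can be bootstrapped into a distinguisher for the two-copy problem $2\mbox{-}\mathrm{QSCD_{ff}}$. The key observation is that once $\pi$ is known, distinguishing $\rho_0^{(\pi)}$ from $\rho_1^{(\pi)}$ becomes easy: by Lemma \ref{base-phi-1}(2), the supports of the two mixed states are spanned by orthogonal families $\{\ket{\phi_{\sigma,0}^{(\pi)}}\}$ and $\{\ket{\phi_{\sigma,1}^{(\pi)}}\}$, so the orthogonal measurement $\{\Pi_0,\Pi_1\}$ with $\Pi_s = \sum_{\sigma\in S_n^{(\pi)}}\density{\phi_{\sigma,s}^{(\pi)}}{\phi_{\sigma,s}^{(\pi)}}$ (which is implementable in polynomial time from a description of $\pi$, e.g.\ via a version of Step (S2) of $C_{SPA}$ using the controlled-$\pi$ operator $C_\pi$) accepts $\rho_0^{(\pi)}$ with probability $1$ and $\rho_1^{(\pi)}$ with probability $0$, giving advantage $1$.

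Concretely, I would build the following algorithm $\BB$ for $2\mbox{-}\mathrm{QSCD_{ff}}$: on input $(1^n,\rho^{\otimes 2})$ with $\rho\in\{\rho_0^{(\pi)}(n),\rho_1^{(\pi)}(n)\}$, first apply the procedure $P_2$ of Section \ref{sec:permutation} to the first copy of $\rho$; since $P_2$ maps $\ket{\phi_{\sigma,s}^{(\pi)}}$ to $\ket{\phi_{\sigma,1-s}^{(\pi)}}$ without knowing $(s,\pi)$, it maps $\rho_s^{(\pi)}$ to $\rho_{1-s}^{(\pi)}$, so after this step $\BB$ holds one copy of $\rho_0^{(\pi)}$ together with one copy of the original $\rho$. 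Feed the guaranteed $\rho_0^{(\pi)}$ copy to the $\mathrm{HPSP}$ solver $\AAA$ to obtain a candidate permutation $\pi'$; if $\pi'\notin K_n$, output NO (say). Otherwise, using $\pi'$, perform the orthogonal measurement $\{\Pi_0,\Pi_1\}$ described above on the remaining copy $\rho$, outputting YES if the outcome is $0$ and NO if it is $1$.

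For the analysis, fix $n$ in the infinite set on which $\AAA$ succeeds. Averaging over $\pi\in K_n$ chosen uniformly at random: with probability at least $1/p(n)$ the solver returns $\pi'=\pi$ (this uses that after $P_2$ the copy handed to $\AAA$ is exactly $\rho_0^{(\pi)}$, matching $\mathrm{HPSP}$'s promise), and conditioned on that event the measurement $\{\Pi_0,\Pi_1\}$ correctly reports whether the remaining copy is $\rho_0^{(\pi)}$ or $\rho_1^{(\pi)}$ with certainty. When $\pi'\neq\pi$, the probabilities $\prob[\BB=1\mid\rho=\rho_0^{(\pi)}]$ and $\prob[\BB=1\mid\rho=\rho_1^{(\pi)}]$ may differ, but trivially by at most $1$ and always by a nonnegative amount in expectation — more carefully, I would note that the distinguishing advantage is $\bigl|\prob[\BB(1^n,\rho_0^{(\pi)\,\otimes 2})=1]-\prob[\BB(1^n,\rho_1^{(\pi)\,\otimes 2})=1]\bigr|$ and that on the "$\pi'=\pi$" branch this contributes $1$, so after averaging the overall average advantage is at least $1/p(n) - (\text{error from wrong-}\pi'\text{ branch})$. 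The clean way to kill the error term is to observe that on the wrong branch the two conditional acceptance probabilities can be made equal by a symmetrization: before running the whole procedure, $\BB$ flips a fair coin and, if heads, first applies $P_2$ to \emph{both} copies (swapping the roles of $\rho_0$ and $\rho_1$ globally) and negates its final answer; this makes the wrong-$\pi'$ contribution cancel exactly between the $\rho_0$ and $\rho_1$ inputs while preserving the full contribution $1$ on the correct branch. Hence $\BB$ solves $2\mbox{-}\mathrm{QSCD_{ff}}$ with average advantage at least $\tfrac12\cdot\tfrac1{p(n)}$, and setting $q(n)=2p(n)$ completes the proof.

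The main obstacle is precisely the last point: controlling the contribution of the event $\pi'\neq\pi$, since a naive reduction only yields a \emph{lower bound} of the form $1/p(n)-\epsilon$ with $\epsilon$ a priori uncontrolled. The symmetrization trick above (randomly pre-applying $P_2$ to both copies and flipping the output) is what I expect to need spelling out carefully; the rest — that $P_2$ intertwines $\rho_0^{(\pi)}$ and $\rho_1^{(\pi)}$, and that knowledge of $\pi$ yields a perfect projective distinction via Lemma \ref{base-phi-1}(2) — is routine given the machinery already developed in Section \ref{sec:permutation}.
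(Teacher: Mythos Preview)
There is a genuine gap. Your central claim --- ``after this step $\BB$ holds one copy of $\rho_0^{(\pi)}$'' --- is false: $P_2$ sends $\rho_s^{(\pi)}$ to $\rho_{1-s}^{(\pi)}$, so after applying $P_2$ to the first copy you hold $\rho_{1-s}^{(\pi)}$, which is $\rho_0^{(\pi)}$ only when the unknown bit is $s=1$. For $s=0$ the $\mathrm{HPSP}$ solver is fed $\rho_1^{(\pi)}$, outside its promise, and you have no control over the distribution of the returned $\pi'$. Your subsequent analysis (``this uses that after $P_2$ the copy handed to $\AAA$ is exactly $\rho_0^{(\pi)}$'') therefore does not apply to the $s=0$ input, and the two cases are handled asymmetrically.

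Your symmetrization does not rescue the bound you claim. A direct calculation shows that the symmetrized advantage equals the unsymmetrized one, $|p_0(0)-p_0(1)|$, and working this out with your rule ``if $\pi'\notin K_n$ output NO'' gives advantage $\tfrac12\,|\gamma_n+\delta+b-b'|$, where $\gamma_n$ is the $\mathrm{HPSP}$ success probability on $\rho_0^{(\pi)}$ and $\delta,b,b'$ are uncontrolled parameters of its behaviour on $\rho_1^{(\pi)}$; nothing prevents $b'=\gamma_n+\delta+b$. The paper avoids both problems at once: it flips a fair coin to decide whether to apply $P_2$ to the first copy \emph{before} calling the $\mathrm{HPSP}$ solver (so the solver sees $\rho_0^{(\pi)}$ with probability $1/2$ independently of $s$), and on the second copy it runs $C_{SPA}$ with the candidate key, invoking the computation at the end of Section~\ref{sec:partition} that $C_{SPA}$ with a wrong key $\kappa\neq\pi$ outputs $0$ and $1$ with probability exactly $1/2$ each on $\rho_s^{(\pi)}$. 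Thus the wrong-$\pi'$ branch contributes zero bias automatically, no symmetrization is needed there, and the advantage is cleanly at least $\gamma_n$ (up to a constant).
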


\begin{proof}
Let $p$ be a positive polynomial and let $\AAA$ be a polynomial-time quantum algorithm that solves $\mathrm{HPSP}$ with probability at least $1/p(n)$. Let $\rho\otimes \rho$ with $\rho\in\{\rho_0^{(\pi)},\rho_1^{(\pi)}\}$ be an instance of $2\mbox{-}\mathrm{QSCD_{ff}}$, where $\pi$ is an unknown permutation in $K_n$. Let $s\in\{0,1\}$ and assume that  $\rho=\rho_s^{(\pi)}$. Now, our task is to determine whether $s=0$ or $s=1$.

Let $\AAA_0$ be a polynomial-time quantum algorithm that generates $\pi$ from input $(1^n,\rho_0^{(\pi)})$ with probability, say, $\gamma_n$, which is at least $1/p(n)$. Let $\AAA_1$ be a quantum algorithm that takes input $\rho_1^{(\pi)}$, transforms $\rho_1^{(\pi)}$ to $\rho_0^{(\pi)}$ by running $P_2$, and finally apply $\AAA_0$. Obviously, $\AAA_1$ outputs $\pi$ with the same probability as $\AAA_0$ with $\rho_0^{(\pi)}$. Now, let us consider input $\rho_s^{(\pi)}\otimes \rho_s^{(\pi)}$ with unknown values $s$ and $\pi$.

Using the first state $\rho=\rho_s^{(\pi)}$, we obtain $\pi$ as follows.
Consider a purification $\ket{\Phi}= \ket{\Phi_{s}^{(\pi)}}$ of $\rho$. Starting with $\ket{0}\ket{\Phi}$, apply $H\otimes I$ to obtain $\frac{1}{\sqrt{2}} (\ket{0}\ket{\Phi}+\ket{1}\ket{\Phi})$. We apply $\AAA_0$ and $\AAA_1$ separately to generate $\frac{1}{\sqrt{2}} (\ket{0}\otimes (I\otimes \AAA_0)\ket{\Phi}+ \ket{1}\otimes (I\otimes \AAA_1)\ket{\Phi})$. It follows that $\sum_{\kappa\in K_n} \ket{0} \ket{\kappa}\ket{\xi_{\kappa,0}}  + \sum_{\kappa\in K_n}\ket{1}  \ket{\kappa}\ket{\xi_{\kappa,1}}$. By our assumption, we obtain $\|\ket{\xi_{\pi,s}}\|^2 = \gamma_n$ for every $s\in\{0,1\}$.

Next, we apply $C_{SPA}$ to the second register and the second input state $\rho_{s}^{(\pi)}$; that is, $\density{\kappa}{\kappa}\otimes \rho_{s}^{(\pi)}$. If $\kappa=\pi$, this process produces $\density{s}{s}\otimes \rho_{s}^{(\pi)}$. Finally, we observe the first register and output its content $s$.
If $\kappa$ is different from $\pi$, then
after running $C_{SPA}$, we observe $0$ and $1$ with equal probability, as we have argued in Section \ref{sec:partition}.

Therefore, the probability that we correctly obtain $s$ is exactly $\frac{1}{2}(1-\gamma_n) + \gamma_n = \frac{1}{2}+\frac{\gamma_n}{2}$. Calling the entire quantum algorithm by $\BB$, we have just proven that  $\prob[\BB(1^n,\rho_s^{(\pi)} \otimes \rho_s^{(\pi)}) = s] = \frac{1}{2}+\frac{\gamma_n}{2}$. From this equation, it follows that
\[
\left|\prob[\BB(1^n,\rho_0^{(\pi)} \otimes \rho_0^{(\pi)}) =1] - \prob[\BB(1^n,\rho_1^{(\pi)} \otimes \rho_1^{(\pi)}) =1]\right|  = \left| \left( \frac{1}{2} -\frac{\gamma_n}{2} \right) - \left( \frac{1}{2} +\frac{\gamma_n}{2} \right) \right| = \gamma_n.
\]
Since $\gamma_n\geq 1/p(n)$, the lemma follows.
\end{proof}

Using the above lemmas, we can prove Theorem \ref{binding-result}.

\begin{proofof}{Theorem \ref{binding-result}}
We want to show the contrapositive of the theorem. First, assume that there exist a positive polynomial $p$ and Alice's cheating strategy $\UU=(U_1,U_2^{(0)},I)$ such that $T_0^{(\UU)}(n)+T_1^{(\UU)}(n)\geq 1+1/p(n)$ for infinitely-many lengths $n$. By Lemmas \ref{HPSP-S0-S1-bound} and \ref{HSP-vs-QSCDff}, we conclude that $2\mbox{-}\mathrm{QSCD_{ff}}$ can be solved by a certain polynomial-time quantum algorithm with average advantage at least $1/q(n)$ for a certain polynomial $q$. By Lemma \ref{GA-solvable}, GA must be solved on a quantum computer in polynomial time with average probability at least $1/r(n)$ for a certain polynomial $r$.
\end{proofof}

%%%%%%%%%%%%%%%%%%%%%%%
\section{Quantum Algorithm for HPSP}\label{sec:algorithm-HPSP}

In the previous section, we have left Lemma \ref{HPSP-S0-S1-bound} unproven. Here, we will give its missing proof by constructing an appropriate quantum algorithm that solves HPSP with non-negligible probability, provided that the statistically binding condition does not
hold.

Recall that adversarial Alice takes $(U_1,U_2^{(0)},U_2^{(1)})$ as her cheating strategy. To simplify our analysis, as in \cite{DMS00}, we replace $(U_1,U_2^{(0)},U_2^{(1)})$ by $(\tilde{U}_1,\tilde{U}_2^{(0)},I)$, where $\tilde{U}_1 = (U_2^{(1)}\otimes I_{commit})U_1$ and $\tilde{U}_2^{(0)} = U_2^{(0)}(U_2^{(1)})^{\dagger}$, without changing the probability that Alice successfully cheats Bob. For convenience, hereafter, we write $U_1$ and $U_2^{(0)}$ (without ``tilde'') for $\tilde{U}_1$ and $\tilde{U}_2^{(0)}$, respectively, and deal only with $\UU=(U_1,U_2^{(0)},I)$ as adversarial Alice's cheating strategy, where $U_2^{(1)}=I$.

%%%%%
\subsection{Distillation Algorithm}\label{sec:distilation}

We will present an important subroutine that makes up of the quantum algorithm that solves HPSP in Section \ref{sec:HPSP-algorithm}. Recall that adversarial Alice is now taking the cheating  strategy $\UU=(U_1,U_2^{(0)},I)$, while Bob faithfully follows the protocol.

Let us recall from Eq.~(\ref{eqn:state-in-C1}) that $\ket{\eta^{(C1)}}$ is
of the form $\sum_{\pi\in K_n}\sum_{\sigma,\tau\in S_{n}^{(\pi)}}\sum_{a,s\in\{0,1\}} \ket{\xi_{a,\pi,\sigma}^{(\tau,s)}} \ket{a} \ket{\pi} \ket{\sigma} \ket{\phi_{\tau,s}^{(\pi)}}$, which is obtained by an application of $U_1\otimes I$ to $\ket{0}$ in the entire system $\HH_{all} = \HH_{A}^{(C1)}\otimes \HH_{B}^{(C1)}$.
Note that $T_1^{(\UU)}(n) = \| (I\otimes M_1)\ket{\eta^{(C1)}} \|^2$ since $U_2^{(1)}=I$.
For convenience, we define $\ket{\eta^{(C1)}_{perf}}$ to be the normalized state of $(I\otimes M_1)\ket{\eta^{(C1)}}$; that is,  $\ket{\eta^{(C1)}_{perf}} = \frac{1}{\sqrt{T_1^{(\UU)}(n)}} (I\otimes M_1)\ket{\eta^{(C1)}}$. This is
an ideal quantum state for adversarial Alice because,  from this state,
Alice unveils $1$ with certainty.
By the definition of $M_1$, we can assume that
$\ket{\eta^{(C1)}_{perf}}$ has the form $\sum_{\pi\in K_n} \ket{\xi_{1,\pi}} \ket{1}\ket{\pi}\ket{\Phi_{1}^{(\pi)}}$ with $\sum_{\pi\in K_n}\| \ket{\xi_{1,\pi}} \|^2=1$.

Now, we will demonstrate how to implement $I\otimes M_1$ algorithmically and distill $\ket{\eta^{(C1)}_{perf}}$ from $\ket{\eta^{(C1)}}$ using the measurement in the computational basis.

\s
\n\hrulefill \s\\
\n{\sc Distillation Algorithm $\AAA_{dis}$:}
\renewcommand{\labelitemi}{$\circ$}
\begin{itemize}\vs{-1}
  \setlength{\topsep}{-2mm}%
  \setlength{\itemsep}{1mm}%
  \setlength{\parskip}{0cm}%

\item[(D1)] Prepare an additional register in state $\ket{0}$ in $\HH_2$. Given $\ket{0}\ket{\eta^{(C1)}}$ in $\HH_2\otimes \HH_{A}^{(C1)}$,
    we focus on a term $\ket{\xi_{a,\pi,\sigma}^{(\tau,s)}}\ket{a}\ket{\pi} \ket{\sigma} \ket{\phi_{\tau,s}^{(\pi)}}$ in the subsystem $\HH_{A}^{(C1)}$. We measure the second register in state $\ket{1}$. The state collapses to $\ket{\xi_{1,\pi,\sigma}^{(\tau,s)}}\ket{1}\ket{\pi}\ket{\sigma} \ket{\phi_{\tau,s}^{(\pi)}}$.

\item[(D2)] Transform $\ket{0}\ket{\pi}\ket{\phi_{\tau,s}^{(\pi)}}$ in $\HH_2\otimes \HH_{open2}\otimes \HH_{commit}$ to $\ket{s}\ket{\pi}\ket{\phi_{\tau,s}^{(\pi)}}$ by applying $P_{SPA}$ given in Section \ref{sec:partition}. Measure the first register in state $\ket{1}$. We then obtain $\ket{\xi_{1,\pi,\sigma}^{(\tau,1)}}\ket{1}\ket{\pi}\ket{\phi_{\tau,1}^{(\pi)}}$.

\item[(D3)] Change $\ket{\phi_{\tau,1}^{(\pi)}}$ to $\ket{\phi_{\tau,0}^{(\pi)}}$ by applying $P_2$. Prepare $\ket{\mathrm{id}}$ and  transform $\ket{\pi}\ket{\mathrm{id}} \ket{\phi_{\tau,1}^{(\pi)}}$ in $\HH_{open1}\otimes \HH_{S_n}\otimes \HH_{commit}$ to $\ket{\pi}\ket{\tau} \ket{\mathrm{id}}$ by applying $P_1^{-1}$. We then obtain $\ket{\xi_{1,\pi,\sigma}^{(\tau,1)}}\ket{\pi} \ket{\tau} \ket{\mathrm{id}}$. Measure the fourth register in state $\ket{\sigma}$ to obtain $\ket{\xi_{1,\pi,\sigma}^{(\sigma,1)}}\ket{\pi} \ket{\sigma}\ket{\mathrm{id}}$.

\item[(D4)] Apply $P_1$ to $\ket{\sigma}\ket{\pi}\ket{\mathrm{id}}$ in $\HH_{open}\otimes \HH_{commit}$ to obtain $\ket{\pi}\ket{\sigma}\ket{\phi_{\sigma,1}^{(\pi)}}$.
\end{itemize}\vs{-2}
\n\hrulefill\\

It is not difficult to see that the above algorithm $\AAA_{dis}$ transforms $\ket{\eta^{(C1)}}$ into a quantum state $\ket{\eta^{(C1)}_{perf}}$ with probability $T_1^{(\UU)}(n) = \|(I\otimes M_1)\ket{\eta^{(C1)}} \|^2$.

In the following argument, we are focused on $\ket{\eta^{(C1)}_{perf}}$. Now,  we fix a permutation $\pi'\in K_n$, which is a hidden permutation of an instance $\rho_{0}^{(\pi)}$ of HPSP.
Note that the Hilbert space $span\{\ket{\phi_{\sigma,0}^{(\pi')}}\mid \sigma\in S_n\}$ is determined by a basis $\BB_{0}^{(\pi')} = \{\ket{\phi_{\sigma,0}^{(\pi')}}\mid \sigma\in S_{n}^{(\pi')}\}$, where $S_n^{(\pi')}$ is defined in Section \ref{sec:permutation}.
First, we want to measure  $\HH_{commit}$ in states $\ket{\phi_{\sigma,0}^{(\pi')}}$ for an arbitrary permutation  $\sigma\in S_n^{(\pi')}$.
This is formally done by
a measurement operator $\tilde{M}_{\pi'} \equiv \sum_{\sigma\in S_n^{(\pi')}}M_{commit}^{(0,\sigma,\pi')}$, which projects a quantum state in   $\HH_{commit}$ onto $\ket{\phi_{\sigma,0}^{(\pi')}}$'s.
Letting $\ket{\eta^{(\pi')}} = (I\otimes \tilde{M}_{\pi'})\ket{\eta^{(C1)}_{perf}}$, we want to determine an actual form of $\ket{\eta^{(\pi')}}$. For brevity, we set $\omega_{n} = \frac{|K_n|+1}{\sqrt{2|S_n|}(|K_n|-1)}$.

\begin{lemma}
For each fixed $\pi'\in K_n$,  $\ket{\eta^{(\pi')}} =
\omega_{n} \sum_{\sigma\in S_n}\sum_{\pi\in K_n} \ket{\xi_{1,\pi}}\ket{1}\ket{\pi} \ket{\phi_{\sigma,1}^{(\pi)}}\ket{\phi_{\sigma,0}^{(\pi')}}$.
\end{lemma}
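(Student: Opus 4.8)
The plan is to compute $\ket{\eta^{(\pi')}} = (I\otimes\tilde{M}_{\pi'})\ket{\eta^{(C1)}_{perf}}$ directly from the normal form $\ket{\eta^{(C1)}_{perf}} = \sum_{\pi\in K_n}\ket{\xi_{1,\pi}}\ket{1}\ket{\pi}\ket{\Phi_{1}^{(\pi)}}$. First I would substitute the purification $\ket{\Phi_{1}^{(\pi)}} = \frac{1}{\sqrt{|S_n|}}\sum_{\sigma\in S_n}\ket{\sigma}\ket{\phi_{\sigma,1}^{(\pi)}}$, where the $\ket{\sigma}$ sits in $\HH_{open2}$ and the $\ket{\phi_{\sigma,1}^{(\pi)}}$ in $\HH_{commit}$. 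Since $I\otimes\tilde{M}_{\pi'}$ is the identity off $\HH_{commit}$, the whole task reduces to evaluating $\tilde{M}_{\pi'}\ket{\phi_{\sigma,1}^{(\pi)}}$ for each $\pi\in K_n$ and $\sigma\in S_n$, then reassembling and re-indexing. As a preliminary I would note that, by Lemma \ref{base-phi-1}(3), the vectors in $\BB_{0}^{(\pi')} = \{\ket{\phi_{\tau,0}^{(\pi')}}\mid\tau\in S_n^{(\pi')}\}$ are orthonormal, so $\tilde{M}_{\pi'} = \sum_{\tau\in S_n^{(\pi')}}\density{\phi_{\tau,0}^{(\pi')}}{\phi_{\tau,0}^{(\pi')}}$ is precisely the orthogonal projection onto $\mathrm{span}\,\BB_{0}^{(\pi')}$; equivalently it is the symmetrizer sending $\ket{\tau}\mapsto\frac12(\ket{\tau}+\ket{\tau\pi'})$.

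The core of the argument is a closed form for $\tilde{M}_{\pi'}\ket{\phi_{\sigma,1}^{(\pi)}}$. Here I would use Lemma \ref{Phi-rewritten} to rewrite $\ket{\phi_{\sigma,1}^{(\pi)}} = \frac{1}{|K_n|-1}\sum_{\kappa\in K_n}\bigl(\ket{\phi_{\sigma,0}^{(\kappa)}} - \ket{\phi_{\sigma\pi,0}^{(\kappa)}}\bigr)$, reducing the job to projecting each $\ket{\phi_{\tau,0}^{(\kappa)}}$. Applying the symmetrizer (or Lemma \ref{base-phi-1}(4)), one finds $\tilde{M}_{\pi'}\ket{\phi_{\tau,0}^{(\pi')}} = \ket{\phi_{\tau,0}^{(\pi')}}$ while, for $\kappa\neq\pi'$, $\tilde{M}_{\pi'}\ket{\phi_{\tau,0}^{(\kappa)}} = \frac12\bigl(\ket{\phi_{\tau,0}^{(\pi')}} + \ket{\phi_{\tau\kappa,0}^{(\pi')}}\bigr)$. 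Summing over $\kappa$, the portions living along $\ket{\phi_{\sigma,0}^{(\pi')}}$ and $\ket{\phi_{\sigma\pi,0}^{(\pi')}}$ accumulate with the natural combinatorial weights, and the residual cross-sums $\sum_{\kappa\neq\pi'}\ket{\phi_{\sigma\kappa,0}^{(\pi')}} - \sum_{\kappa\neq\pi'}\ket{\phi_{\sigma\pi\kappa,0}^{(\pi')}}$ are collapsed by the same bijection device used in the proof of Lemma \ref{Phi-rewritten} (left multiplication by $\pi$ carrying $K_n$ to $(K_n\setminus\{\pi\})\cup\{\mathrm{id}\}$). The outcome is an identity $\tilde{M}_{\pi'}\ket{\phi_{\sigma,1}^{(\pi)}} = c_n\bigl(\ket{\phi_{\sigma,0}^{(\pi')}} - \ket{\phi_{\sigma\pi,0}^{(\pi')}}\bigr)$ with $c_n$ an explicit constant that the counting pins down; note that for $\pi = \pi'$ this is automatically $0$ because $\ket{\phi_{\sigma\pi',0}^{(\pi')}} = \ket{\phi_{\sigma,0}^{(\pi')}}$ by Lemma \ref{base-phi-1}(1).

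Finally I would reassemble. Plugging the identity in gives $\ket{\eta^{(\pi')}} = \frac{c_n}{\sqrt{|S_n|}}\sum_{\pi\in K_n}\ket{\xi_{1,\pi}}\ket{1}\ket{\pi}\sum_{\sigma\in S_n}\ket{\sigma}\bigl(\ket{\phi_{\sigma,0}^{(\pi')}} - \ket{\phi_{\sigma\pi,0}^{(\pi')}}\bigr)$, with $\ket{\sigma}$ back in $\HH_{open2}$. Re-indexing $\sigma\mapsto\sigma\pi$ in the second half of the inner sum (a bijection of $S_n$; here $\pi\pi = \mathrm{id}$ gives $\ket{\phi_{\sigma\pi\pi,0}^{(\pi')}} = \ket{\phi_{\sigma,0}^{(\pi')}}$) turns each $\bigl(\ket{\sigma} - \ket{\sigma\pi}\bigr)\otimes\ket{\phi_{\sigma,0}^{(\pi')}}$ into $\sqrt2\,\ket{\phi_{\sigma,1}^{(\pi)}}\otimes\ket{\phi_{\sigma,0}^{(\pi')}}$, absorbing the $\HH_{open2}$ slot into the first factor. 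Collecting the prefactors $c_n$, $1/\sqrt{|S_n|}$ and $\sqrt2$ gives exactly $\omega_n$, so $\ket{\eta^{(\pi')}} = \omega_n\sum_{\sigma\in S_n}\sum_{\pi\in K_n}\ket{\xi_{1,\pi}}\ket{1}\ket{\pi}\ket{\phi_{\sigma,1}^{(\pi)}}\ket{\phi_{\sigma,0}^{(\pi')}}$ as claimed; the $\pi = \pi'$ summand contributes nothing, since pairing $\sigma$ with $\sigma\pi'$ and using $\ket{\phi_{\sigma\pi',1}^{(\pi')}} = -\ket{\phi_{\sigma,1}^{(\pi')}}$ together with $\ket{\phi_{\sigma\pi',0}^{(\pi')}} = \ket{\phi_{\sigma,0}^{(\pi')}}$ makes that inner sum vanish in pairs.

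The step I expect to be the main obstacle is the closed-form evaluation of $\tilde{M}_{\pi'}\ket{\phi_{\sigma,1}^{(\pi)}}$ — in particular, getting $c_n$, hence $\omega_n$, exactly right. The components of $\ket{\phi_{\sigma,1}^{(\pi)}}$ along the bases $\BB_{0}^{(\kappa)}$ with $\kappa\neq\pi'$ are not killed by $\tilde{M}_{\pi'}$ but leak into $\mathrm{span}\,\BB_{0}^{(\pi')}$, so there is a real interplay between the $\kappa = \pi'$ basis and all the others; taming it calls for the bijection identity and disciplined bookkeeping of the boundary terms ($\ket{\sigma\pi'}$, $\ket{\sigma\pi\pi'}$, and the like) that split off when one isolates $\kappa = \pi'$ in each of the four sums, while keeping straight which of $\HH_{open2}$ and $\HH_{commit}$ carries a given $\ket{\sigma}$ or $\ket{\phi_{\cdot,\cdot}^{(\cdot)}}$.
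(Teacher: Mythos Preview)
Your plan uses the same ingredients as the paper's proof---Lemma~\ref{Phi-rewritten}, the action of $\tilde M_{\pi'}$ on $\ket{\phi_{\tau,0}^{(\kappa)}}$, and the $\sigma\mapsto\sigma\pi$ re-indexing---but in the opposite order. The paper re-indexes \emph{first}: it rewrites $\ket{\eta^{(C1)}_{perf}}$ as $\frac{\delta\sqrt2}{\sqrt{|S_n|}}\sum_{\kappa,\sigma,\pi}\ket{\xi_{1,\pi}}\ket{1}\ket{\pi}\ket{\phi_{\sigma,1}^{(\pi)}}\ket{\phi_{\sigma,0}^{(\kappa)}}$ (Lemma~\ref{Phi-rewritten} on $\HH_{commit}$, then the $\sigma\mapsto\sigma\pi$ shift in $\HH_{open2}$), so that $\tilde M_{\pi'}$ meets $\ket{\phi_{\sigma,0}^{(\kappa)}}$ with the index $\sigma$ already aligned with the $\HH_{open2}$ register. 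Your route projects first and re-indexes afterwards, which produces the cross-sums $\sum_{\kappa\neq\pi'}\ket{\phi_{\sigma\kappa,0}^{(\pi')}}$ that then need a separate bijection argument; the paper's ordering avoids ever seeing them.

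There is, however, a real tension in your final step. Your two-term formula $\tilde M_{\pi'}\ket{\phi_{\tau,0}^{(\kappa)}}=\tfrac12\bigl(\ket{\phi_{\tau,0}^{(\pi')}}+\ket{\phi_{\tau\kappa,0}^{(\pi')}}\bigr)$ for $\kappa\neq\pi'$ is the correct one (it drops out of the symmetrizer immediately), whereas the paper works with the one-term formula $\tilde M_{\pi'}\ket{\phi_{\sigma,0}^{(\kappa)}}=\tfrac12\ket{\phi_{\sigma,0}^{(\pi')}}$, omitting the $\ket{\phi_{\sigma\kappa,0}^{(\pi')}}$ contribution. If you carry your computation through---the cross-sums collapse via the bijection device to $-2(\ket{\phi_{\sigma,0}^{(\pi')}}-\ket{\phi_{\sigma\pi,0}^{(\pi')}})$, cancelling against the accumulated diagonal weight $\tfrac{|K_n|+1}{2}$ down to $\tfrac{|K_n|-1}{2}$---you obtain $c_n=\tfrac12$, hence a prefactor $\tfrac{1}{\sqrt{2|S_n|}}$, not $\omega_n=\tfrac{|K_n|+1}{\sqrt{2|S_n|}(|K_n|-1)}$. (You can see $c_n=\tfrac12$ even more directly by applying the symmetrizer straight to $\ket{\phi_{\sigma,1}^{(\pi)}}=\tfrac1{\sqrt2}(\ket{\sigma}-\ket{\sigma\pi})$, skipping Lemma~\ref{Phi-rewritten} altogether.) So the assertion ``collecting the prefactors\ldots gives exactly $\omega_n$'' will not verify; the discrepancy with the stated $\omega_n$ traces back to the extra projection term the paper drops.
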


\begin{proof}
As the first step, we intend to  express $\ket{\eta^{(C1)}_{perf}}$ in terms of $\ket{\phi_{\sigma,0}^{(\pi')}}$.
Recall that  $\ket{\eta^{(C1)}_{perf}} = \frac{1}{\sqrt{|S_n|}} \sum_{\sigma\in S_n} \sum_{\pi\in K_n} \ket{\xi_{1,\pi}} \ket{1}\ket{\pi} \ket{\sigma} \ket{\phi_{\sigma,1}^{(\pi)}}$.
For convenience, let  $\ket{\Theta_{\pi,\sigma}} =  \frac{1}{\sqrt{|S_n|}} \ket{\xi_{1,\pi}} \ket{1}\ket{\pi}\ket{\sigma}$ and
$\delta = \frac{1}{|K_n|-1}$. Since $\ket{\eta^{(C1)}_{perf}} = \sum_{\pi\in K_n} \ket{\xi_{1,\pi}}\ket{1} \ket{\pi} \ket{\Phi_{1}^{(\pi)}}$,  $\ket{\eta^{(C1)}_{perf}}$ can be expressed as $\sum_{\pi,\sigma}\ket{\Theta_{\pi,\sigma}}\ket{\phi_{\sigma,1}^{(\pi)}}$.
Since $\ket{\phi_{\sigma,1}^{(\pi)}} =  \delta \sum_{\kappa\in K_n} (\ket{\phi_{\sigma,0}^{(\kappa)}} - \ket{\phi_{\sigma\pi,0}^{(\kappa)}})$
by Lemma \ref{Phi-rewritten},  $\ket{\eta^{(C1)}_{perf}}$ is written as
$\delta \sum_{\kappa\in K_n} \sum_{\pi,\sigma}\ket{\Theta_{\pi,\sigma}} \ket{\phi_{\sigma,0}^{(\kappa)}}  - \delta \sum_{\kappa\in K_n} \sum_{\pi,\sigma}\ket{\Theta_{\pi,\sigma}} \ket{\phi_{\sigma\pi,0}^{(\kappa)}}$.
Since $\sum_{\pi,\sigma}\ket{\Theta_{\pi,\sigma}} \ket{\phi_{\sigma\pi,0}^{(\kappa)}}$ equals
$\sum_{\pi,\sigma}\ket{\Theta_{\pi,\sigma\pi}} \ket{\phi_{\sigma,0}^{(\kappa)}}$,  the state
$\ket{\eta^{(C1)}_{perf}}$ is further written as
\begin{equation}\label{eqn:eta-C1-rewritten-pi}
\ket{\eta^{(C1)}_{perf}} = \delta \sum_{\kappa\in K_n} \sum_{\sigma\in S_n}\sum_{\pi\in K_n}(\ket{\Theta_{\pi,\sigma}} - \ket{\Theta_{\pi,\sigma\pi}}) \ket{\phi_{\sigma,0}^{(\kappa)}}
= \frac{\delta\sqrt{2}}{\sqrt{|S_n|}} \sum_{\kappa} \sum_{\sigma}\sum_{\pi} \ket{\xi_{1,\pi}}\ket{1}\ket{\pi} \ket{\phi_{\sigma,1}^{(\pi)}}\ket{\phi_{\sigma,0}^{(\kappa)}},
\end{equation}
since $\ket{\Theta_{\pi,\sigma}} - \ket{\Theta_{\pi,\sigma\pi}} = \frac{1}{\sqrt{|S_n|}} \ket{\xi_{1,\pi}}\ket{1}\ket{\pi} (\ket{\sigma}-\ket{\sigma\pi}) = \frac{\sqrt{2}}{\sqrt{|S_n|}} \ket{\xi_{1,\pi}}\ket{1}\ket{\pi} \ket{\phi_{\sigma,1}^{(\pi)}}$.

Obviously, if $\sigma\in S_n^{(\pi')}$, then $\tilde{M}_{\pi'}\ket{\phi_{\sigma,0}^{(\pi')}} = \ket{\phi_{\sigma,0}^{(\pi')}}$ holds. When $\sigma\in S_n-S_n^{(\pi')}$, we obtain $\tilde{M}_{\pi'}\ket{\phi_{\sigma\pi',0}^{(\pi')}} = \tilde{M}_{\pi'}\ket{\phi_{\sigma,0}^{(\pi')}} = \ket{\phi_{\sigma,0}^{(\pi')}} = \ket{\phi_{\sigma\pi',0}^{(\pi')}}$. Let  $\kappa\in K_n-\{\pi'\}$. For any $\sigma\in S_n^{(\pi')}$, since  $\measure{\phi_{\sigma,0}^{(\pi')}}{\phi_{\sigma,0}^{(\kappa)}} = \frac{1}{2}$ by Lemma \ref{base-phi-1}(4),
it follows that $\tilde{M}_{\pi'} \ket{\phi_{\sigma,0}^{(\kappa)}} = \frac{1}{2}\ket{\phi_{\sigma,0}^{(\pi')}}$. Moreover, since Lemma \ref{base-phi-1}(4) implies   $\measure{\phi_{\sigma,0}^{(\pi')}}{\phi_{\sigma\pi',0}^{(\kappa)}} = \frac{1}{2}$, we obtain $\tilde{M}_{\pi'} \ket{\phi_{\sigma\pi',0}^{(\kappa)}} = \frac{1}{2}\ket{\phi_{\sigma,0}^{(\pi')}}$, which equals $\frac{1}{2}\ket{\phi_{\sigma\pi',0}^{(\pi')}}$. Overall, it holds that $\tilde{M}_{\pi'} \ket{\phi_{\sigma,0}^{(\kappa)}} = \frac{1}{2}\ket{\phi_{\sigma,0}^{(\pi')}}$ for all permutations $\sigma\in S_n$.

Since $\ket{\eta^{(\pi')}} = (I\otimes \tilde{M}_{\pi'})\ket{\eta^{(C1)}_{perf}}$, from Eq.~(\ref{eqn:eta-C1-rewritten-pi}), $\ket{\eta^{(\pi')}}$ is expressed as
\begin{eqnarray*}\label{eqn:eta-C1-rephrased}
\ket{\eta^{(\pi')}} &=& \frac{\delta\sqrt{2}}{\sqrt{|S_n|}} \left[
\sum_{\sigma}\sum_{\pi} \ket{\xi_{1,\pi}}\ket{1}\ket{\pi} \ket{\phi_{\sigma,1}^{(\pi)}}\ket{\phi_{\sigma,0}^{(\pi')}}
+ \frac{1}{2}
\sum_{\sigma}\sum_{\pi} \sum_{\kappa:\kappa\neq \pi'} \ket{\xi_{1,\pi}}\ket{1}\ket{\pi} \ket{\phi_{\sigma,1}^{(\pi)}}\ket{\phi_{\sigma,0}^{(\pi')}} \right] \\
&=& \frac{\delta\sqrt{2}}{\sqrt{|S_n|}}  \left[
\sum_{\sigma}\sum_{\pi} \ket{\xi_{1,\pi}}\ket{1}\ket{\pi} \ket{\phi_{\sigma,1}^{(\pi)}}\ket{\phi_{\sigma,0}^{(\pi')}}
+ \frac{|K_n|-1}{2}
\sum_{\sigma}\sum_{\pi}  \ket{\xi_{1,\pi}}\ket{1}\ket{\pi} \ket{\phi_{\sigma,1}^{(\pi)}}\ket{\phi_{\sigma,0}^{(\pi')}} \right] \\
&=& \frac{\delta\sqrt{2}}{\sqrt{|S_n|}}\cdot \frac{|K_n|+1}{2}
\sum_{\sigma}\sum_{\pi} \ket{\xi_{1,\pi}}\ket{1}\ket{\pi} \ket{\phi_{\sigma,1}^{(\pi)}}\ket{\phi_{\sigma,0}^{(\pi')}}.
\end{eqnarray*}
Therefore, we obtain
$\ket{\eta^{(\pi')}} = \frac{|K_n|+1}{\sqrt{2|S_n|}(|K_n|-1)}  \sum_{\sigma}\sum_{\pi} \ket{\xi_{1,\pi}}\ket{1}\ket{\pi} \ket{\phi_{\sigma,1}^{(\pi)}}\ket{\phi_{\sigma,0}^{(\pi')}}$ by the definition of $\delta$.
\end{proof}

For convenience, we denote by $\ket{\eta^{(\pi')}_{norm}}$ the normalized state of $\ket{\eta^{(\pi')}}$, \ie $\ket{\eta^{(\pi')}_{norm}} = \frac{1}{\|\ket{\eta^{(\pi')}}\|}\ket{\eta^{(\pi')}}$.

\begin{lemma}\label{norm-of-eta-pi}
$\ket{\eta^{(\pi')}_{norm}} = \omega'_n \sum_{\sigma\in S_n}\sum_{\pi\in K_n} \ket{\xi_{1,\pi}} \ket{1} \ket{\pi} \ket{\phi_{\sigma,1}^{(\pi)}} \ket{\phi_{\sigma,0}^{(\pi')}}$, where $\omega'_n = \frac{1}{\sqrt{|S_n|(1-\|\ket{\xi_{1,\pi'}}\|^2)}}$.
\end{lemma}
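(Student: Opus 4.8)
The plan is to read the claim off from the preceding lemma together with a single norm computation. That lemma already supplies $\ket{\eta^{(\pi')}} = \omega_n \sum_{\sigma\in S_n}\sum_{\pi\in K_n}\ket{\xi_{1,\pi}}\ket{1}\ket{\pi}\ket{\phi_{\sigma,1}^{(\pi)}}\ket{\phi_{\sigma,0}^{(\pi')}}$, so the only thing left is to evaluate $\|\ket{\eta^{(\pi')}}\|$; then $\ket{\eta^{(\pi')}_{norm}} = \ket{\eta^{(\pi')}}/\|\ket{\eta^{(\pi')}}\|$ automatically has the stated shape, and the scalar $\omega_n$ cancels between numerator and denominator.

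Write $\ket{\Psi}$ for the double sum above, so that $\|\ket{\eta^{(\pi')}}\|^2 = \omega_n^2\,\measure{\Psi}{\Psi}$. Because the $\HH_{open1}$ register carries the orthonormal label $\ket{\pi}$, all cross terms with distinct permutations in that register drop out, and $\measure{\Psi}{\Psi}$ collapses to $\sum_{\pi\in K_n}\|\ket{\xi_{1,\pi}}\|^2\, N_\pi$, where $N_\pi = \big\|\sum_{\sigma\in S_n}\ket{\phi_{\sigma,1}^{(\pi)}}\ket{\phi_{\sigma,0}^{(\pi')}}\big\|^2$. I would also record at this point that $\sum_{\pi\in K_n}\|\ket{\xi_{1,\pi}}\|^2 = 1$, which is the normalization of $\ket{\eta^{(C1)}_{perf}}$.

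The core step is the evaluation of $N_\pi = \sum_{\sigma_1,\sigma_2\in S_n}\measure{\phi_{\sigma_1,1}^{(\pi)}}{\phi_{\sigma_2,1}^{(\pi)}}\,\measure{\phi_{\sigma_1,0}^{(\pi')}}{\phi_{\sigma_2,0}^{(\pi')}}$ via Lemma \ref{base-phi-1}(3) with $s=1$ together with Lemma \ref{base-phi-1}(1)--(2). A summand is nonzero only when $\sigma_2\in\{\sigma_1,\sigma_1\pi\}$ and simultaneously $\sigma_2\in\{\sigma_1,\sigma_1\pi'\}$. If $\pi\neq\pi'$ this intersection is $\{\sigma_1\}$, each surviving term equals $1\cdot 1$, and $N_\pi=|S_n|$. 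If $\pi=\pi'$, the choice $\sigma_2=\sigma_1$ still contributes $+1$, but $\sigma_2=\sigma_1\pi'$ now contributes $\measure{\phi_{\sigma_1,1}^{(\pi')}}{\phi_{\sigma_1\pi',1}^{(\pi')}}\,\measure{\phi_{\sigma_1,0}^{(\pi')}}{\phi_{\sigma_1\pi',0}^{(\pi')}}=(-1)(+1)=-1$, so the two contributions cancel for every $\sigma_1$ and $N_{\pi'}=0$. This sign cancellation — equivalently, the fact that $\tilde M_{\pi'}$ annihilates the $\pi=\pi'$ part of $\ket{\eta^{(C1)}_{perf}}$, since the $s=0$ and $s=1$ members of the basis $\BB^{(\pi')}$ span orthogonal subspaces — is the one place requiring care; the rest is bookkeeping over the double index $(\sigma_1,\sigma_2)$.

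Putting the pieces together gives $\measure{\Psi}{\Psi} = |S_n|\sum_{\pi\neq\pi'}\|\ket{\xi_{1,\pi}}\|^2 = |S_n|\,(1-\|\ket{\xi_{1,\pi'}}\|^2)$, hence $\|\ket{\eta^{(\pi')}}\| = \omega_n\sqrt{|S_n|(1-\|\ket{\xi_{1,\pi'}}\|^2)}$. Dividing $\ket{\eta^{(\pi')}}$ by this quantity cancels the factor $\omega_n$ and produces overall coefficient $\omega'_n = 1/\sqrt{|S_n|(1-\|\ket{\xi_{1,\pi'}}\|^2)}$, which is exactly the assertion of Lemma \ref{norm-of-eta-pi}. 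I expect no genuine obstacle beyond organizing the $(\sigma_1,\sigma_2)$ case analysis cleanly enough to make the $\pi=\pi'$ cancellation transparent.
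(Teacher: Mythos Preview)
Your proposal is correct and follows essentially the same route as the paper: both compute $\|\ket{\eta^{(\pi')}}\|^2$ from the preceding lemma's expression by using orthogonality of the $\ket{\pi}$ register and then invoking Lemma~\ref{base-phi-1}(3) to reduce the double sum over $(\sigma_1,\sigma_2)$, arriving at $\omega_n^2\,|S_n|\,(1-\|\ket{\xi_{1,\pi'}}\|^2)$. Your treatment of the $\pi=\pi'$ cancellation is actually more explicit than the paper's, which passes silently to $\sum_{\pi\neq\pi'}$ without spelling out why that term vanishes.
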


\begin{proof}
We want to estimate the value $\|\ket{\eta^{(\pi')}}\|$.
First, we claim that (*) $\|\ket{\eta^{(\pi')}}\|^2 = \frac{(1-\|\ket{\xi_{1,\pi'}}\|^2)(|K_n|+1)^2}{2(|K_n|-1)^2}$. If this claim is true, then $\ket{\eta^{(\pi')}_{norm}}$ is written as
\begin{eqnarray*}
\ket{\eta^{(\pi')}_{norm}} &=&
\left[ \left( \frac{|K_n|+1}{\sqrt{2|S_n|}(|K_n|-1)} \right)\left/\right. \left( \frac{\sqrt{1-\|\ket{\xi_{1,\pi'}}\|^2} (|K_n|+1)}{\sqrt{2}(|K_n|-1)} \right)  \right] \sum_{\sigma}\sum_{\pi} \ket{\xi_{1,\pi}}\ket{1}\ket{\pi} \ket{\phi_{\sigma,1}^{(\pi)}}\ket{\phi_{\sigma,0}^{(\pi')}} \\
&=& \frac{1}{\sqrt{|S_n|}\sqrt{1-\|\ket{\xi_{1,\pi'}}\|^2}} \sum_{\sigma}\sum_{\pi} \ket{\xi_{1,\pi}}\ket{1}\ket{\pi} \ket{\phi_{\sigma,1}^{(\pi)}}\ket{\phi_{\sigma,0}^{(\pi')}}.
\end{eqnarray*}

The aforementioned claim (*) will be proven as follows.
First, we note that $\| \ket{\eta^{(\pi')}} \|^2$ equals $\omega_n^2 \sum_{\sigma,\tau}\sum_{\pi} \| \ket{\xi_{1,\pi}} \|^2 (\bra{\phi_{\sigma,0}^{(\pi')}}\bra{\phi_{\sigma,1}^{(\pi)}}) (\ket{\phi_{\tau,1}^{(\pi)}}\bra{\phi_{\tau,0}^{(\pi')}})$, which is $\omega_n^2 \sum_{\sigma,\pi} \|\ket{\xi_{1,\pi}}\|^2 [ \sum_{\tau} \measure{\phi_{\sigma,1}^{(\pi)}}{\phi_{\tau,1}^{(\pi)}}  \measure{\phi_{\sigma,0}^{(\pi')}}{\phi_{\tau,0}^{(\pi')}} ]$. By Lemma \ref{base-phi-1}, $\measure{\phi_{\sigma,1}^{(\pi)}}{\phi_{\tau,1}^{(\pi)}} = 1$ if $\tau=\sigma$; $-1$ if $\tau=\sigma\pi$; $0$ otherwise. Moreover, $\measure{\phi_{\sigma,0}^{(\pi')}}{\phi_{\tau,0}^{(\pi')}} = 1$ if $\tau=\sigma$ or $\tau=\sigma\pi'$; $0$ otherwise. Thus, it follows that
$\| \ket{\eta^{(\pi')}} \|^2 = \omega_n^2 \sum_{\pi:\pi\neq \pi'} \| \ket{\xi_{1,\pi}} \|^2 [\sum_{\sigma}  \measure{\phi_{\sigma,1}^{(\pi)}}{\phi_{\sigma,1}^{(\pi)}}  \measure{\phi_{\sigma,0}^{(\pi')}}{\phi_{\sigma,0}^{(\pi')}} ] = \omega_n^2 \sum_{\pi:\pi\neq \pi'} \| \ket{\xi_{1,\pi}} \|^2 \cdot |S_n|$. Since $\sum_{\pi:\pi\neq \pi'} \| \ket{\xi_{1,\pi}} \|^2 = 1 - \|\ket{\xi_{1,\pi'}}\|^2$, we obtain  $\| \ket{\eta^{(\pi')}} \|^2 = \omega_n^2|S_n|(1 - \|\ket{\xi_{1,\pi'}}\|^2)$. By the definition of $\omega_n$, the lemma follows.
\end{proof}

In the subsequent subsection, we will explain how to solve HPSP efficiently on a quantum computer.

%%%%%%%%
\subsection{HPSP Algorithm}\label{sec:HPSP-algorithm}

To solve HPSP, we first generate a quantum state $\ket{\eta^{(\pi')}_{norm}}$ with an appropriate probability, apply $U_2^{(0)}$, and finally measure selected qubits.  The following quantum algorithm $\AAA_{HPSP}$ behaves exactly as described.
In what follows, we tend to drop superscript ``$\UU$'' from $T_s^{(\UU)}$ for brevity.

\s
\n\hrulefill \s\\
\n{\sc HPSP Algorithm $\AAA_{HPSP}$:}
\renewcommand{\labelitemi}{$\circ$}
\begin{itemize}\vs{-1}
  \setlength{\topsep}{-2mm}%
  \setlength{\itemsep}{1mm}%
  \setlength{\parskip}{0cm}%

\item[(M1)] Assume that we are given a quantum state $\rho=\rho_{0}^{(\pi')}$ with an unknown permutation $\pi'\in K_n$. We consider its purification of the form $\ket{\Phi_{0}^{(\pi')}} = \frac{1}{\sqrt{|S_n|}} \sum_{\tau\in S_n} \ket{\tau}\ket{\phi_{\tau,0}^{(\pi')}} = \frac{1}{\sqrt{|S_n|}} \sum_{\tau\in S_n} \ket{\phi_{\tau,0}^{(\pi')}} \ket{\tau}$. Since we are given only a reduced state $\rho$, we assume that we are allowed to manipulate only the first register of $\ket{\Phi_{0}^{(\pi)}}$.
    Starting with $\ket{0}$, we apply $U_1\otimes I$ and then run $\AAA_{dis}$. We then obtain $\sqrt{T_1(n)}\ket{\eta^{(C1)}}$; that is, $\frac{\sqrt{T_1(n)}}{\sqrt{|S_n|}} \sum_{\sigma,\pi}\ket{\xi_{1,\pi}} \ket{1}\ket{\pi} \ket{\sigma} \ket{\phi_{\sigma,1}^{(\pi)}} \otimes \frac{1}{\sqrt{|S_n|}} \sum_{\tau} \ket{\tau}\ket{\phi_{\tau,0}^{(\pi')}}$.

\item[(M2)] Transform $\ket{\pi} \ket{\sigma} \ket{\phi_{\sigma,1}^{(\pi)}}$ into $\ket{\pi} \ket{\sigma} \ket{\mathrm{id}}$ by running $P_2$ and $P_1^{-1}$. Now, we obtain  $\frac{\sqrt{T_1(n)}}{\sqrt{|S_n|}}  \sum_{\sigma,\pi}\ket{\xi_{1,\pi}} \ket{1}\ket{\pi} \ket{\sigma} \ket{\mathrm{id}} \otimes \frac{1}{\sqrt{|S_n|}} \sum_{\tau} \ket{\tau} \ket{\phi_{\tau,0}^{(\pi')}}$.

\item[(M3)] Swap two registers $\ket{\sigma} \ket{\mathrm{id}}$ and $\ket{\tau}\ket{\phi_{\tau,0}^{(\pi')}}$ to obtain $\frac{\sqrt{T_1(n)}}{\sqrt{|S_n|}} \sum_{\tau,\pi}\ket{\xi_{1,\pi}} \ket{1}\ket{\pi} \ket{\tau} \ket{\phi_{\tau,0}^{(\pi')}} \otimes \frac{1}{\sqrt{|S_n|}} \sum_{\sigma} \ket{\sigma} \ket{\mathrm{id}}$.

\item[(M4)] Transform $\ket{\pi}\ket{\tau}$ into $\ket{\pi}\ket{\phi_{\tau,1}^{(\pi)}}$ by applying $P_1$. Moreover, transform $\frac{1}{\sqrt{|S_n|}} \sum_{\sigma} \ket{\sigma} \ket{\mathrm{id}}$ into $\ket{\mathrm{id}}\ket{\mathrm{id}}$.
    The current state is now of the from $\frac{\sqrt{T_1(n)}}{\sqrt{|S_n|}} \sum_{\tau,\pi}\ket{\xi_{1,\pi}} \ket{1}\ket{\pi} \ket{\phi_{\tau,1}^{(\pi)}} \ket{\phi_{\tau,0}^{(\pi')}} \otimes \ket{\mathrm{id}}\ket{\mathrm{id}}$, which equals $\sqrt{T_1(n)(1-\|\ket{\xi_{1,\pi'}}\|^2)} \ket{\eta^{(\pi')}_{norm}}\otimes \ket{\mathrm{id}}\ket{\mathrm{id}}$ by Lemma \ref{norm-of-eta-pi}.

\item[(M5)] Apply $U_2^{(0)}\otimes I$ to the subsystem  $\HH_{A}^{(R1)}\otimes \HH_{B}^{(R1)}$.

\item[(M6)] Measure $\HH_{bit}\otimes \HH_{open1}$. Whenever we observe $(a,\pi)$, if $a\neq 0$, then reject. Otherwise, output $\pi$.
\end{itemize}\vs{-2}
\n\hrulefill\\

%%%%%

To complete the proof of Lemma \ref{HPSP-S0-S1-bound}, it suffices to show that the success probability $p_{\pi'}$ of obtaining $\pi'$ from $\rho_{0}^{(\pi')}$ by running $\AAA_{HPSP}$, over all $\pi\in K_n$ chosen uniformly at random, is at least $\frac{1}{8p(n)^2}$. This statement follows from two separate claims. The first claim below makes a bridge between the probability $p_{\pi'}$ and the state $(I\otimes M_0)(U_2^{(0)}\otimes I)(I\otimes  M_1) \ket{\eta^{(C1)}}$.

\begin{claim}\label{HPSP-success-prob}
For any fixed $\pi'\in K_n$, the success probability $p_{\pi'}$ of obtaining $\pi'$ from $\rho_{0}^{(\pi')}$ by running $\AAA_{HPSP}$ is at least $2(1-\frac{2}{|K_n|+1})^2 \| (I\otimes M_0)(U_2^{(0)}\otimes I)(I\otimes  M_1) \ket{\eta^{(C1)}} \|^2$.
\end{claim}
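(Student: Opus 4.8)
The plan is to unwind $\AAA_{HPSP}$ step by step, feeding in the two lemmas stated just above the claim, and then to compare the measurement statistics of step (M6) with those of the honest opening measurement $M_0$. First I would record the state produced by (M1)--(M4). By the correctness of $\AAA_{dis}$, on the branch of probability $T_1(n)=\|(I\otimes M_1)\ket{\eta^{(C1)}}\|^2$ step (M1) leaves the registers in $\ket{\eta^{(C1)}_{perf}}\otimes\ket{\Phi_0^{(\pi')}}$, so the sub-normalized success branch is $\sqrt{T_1(n)}\,\ket{\eta^{(C1)}_{perf}}\otimes\ket{\Phi_0^{(\pi')}}$. Steps (M2)--(M4) consist only of unitary bookkeeping ($P_2$, $P_1^{-1}$, $P_1$, register swaps, and re-preparations of $\ket{\mathrm{id}}$), and the lemma computing $\ket{\eta^{(\pi')}}$ together with Lemma \ref{norm-of-eta-pi} identify their effect on this branch as
\[
\frac{\sqrt{T_1(n)}}{\omega_n\sqrt{|S_n|}}\bigl((I\otimes\tilde{M}_{\pi'})\ket{\eta^{(C1)}_{perf}}\bigr)\otimes\ket{\mathrm{id}}\ket{\mathrm{id}}\;=\;\sqrt{T_1(n)\bigl(1-\|\ket{\xi_{1,\pi'}}\|^2\bigr)}\;\ket{\eta^{(\pi')}_{norm}}\otimes\ket{\mathrm{id}}\ket{\mathrm{id}},
\]
where I use the elementary identity $\frac{1}{\omega_n^2|S_n|}=2\bigl(1-\frac{2}{|K_n|+1}\bigr)^2$.

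Next I would push this through (M5)--(M6). Step (M5) applies $U_2^{(0)}\otimes I$ on $\HH_{A}^{(R1)}\otimes\HH_{B}^{(R1)}$, leaving the ancillas carrying $\ket{\mathrm{id}}\ket{\mathrm{id}}$ untouched; step (M6) measures $\HH_{bit}\otimes\HH_{open1}$ and returns the observed permutation exactly when the observed bit is $0$. Thus the event ``output $=\pi'$'' has probability at least the squared norm of the post-(M5) component with $\HH_{bit}=\ket 0$ and $\HH_{open1}=\ket{\pi'}$ (all other computational branches, on which $\AAA_{dis}$ or the bookkeeping measurements gave the wrong outcome, can only add weight). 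Since $\tilde{M}_{\pi'}$ acts on $\HH_{commit}$ and hence commutes with $U_2^{(0)}\otimes I$, this gives $p_{\pi'}\ge\frac{T_1(n)}{\omega_n^2|S_n|}\,\|(M_{bit}^{(0)}\otimes M_{open1}^{(\pi')}\otimes I)(U_2^{(0)}\otimes I)(I\otimes\tilde{M}_{\pi'})\ket{\eta^{(C1)}_{perf}}\|^2$. Because $(I\otimes M_1)\ket{\eta^{(C1)}}=\sqrt{T_1(n)}\,\ket{\eta^{(C1)}_{perf}}$, the right-hand side of the claim equals $\frac{T_1(n)}{\omega_n^2|S_n|}\,\|(I\otimes M_0)(U_2^{(0)}\otimes I)\ket{\eta^{(C1)}_{perf}}\|^2$, so it remains to prove the operator inequality
\[
\bigl\|(M_{bit}^{(0)}\otimes M_{open1}^{(\pi')}\otimes I)(U_2^{(0)}\otimes I)(I\otimes\tilde{M}_{\pi'})\ket{\eta^{(C1)}_{perf}}\bigr\|^2\;\ge\;\bigl\|(I\otimes M_0)(U_2^{(0)}\otimes I)\ket{\eta^{(C1)}_{perf}}\bigr\|^2 .
\]

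This operator inequality is the heart of the argument. My plan for it: decompose $M_0=\sum_{\kappa\in K_n}M_{bit}^{(0)}\otimes M_{open1}^{(\kappa)}\otimes M_{mix}^{(0,\kappa)}$ into its pairwise orthogonal ($\HH_{open1}$-labelled) summands, so the right-hand side is $\sum_{\kappa}\|(M_{bit}^{(0)}\otimes M_{open1}^{(\kappa)}\otimes M_{mix}^{(0,\kappa)})(U_2^{(0)}\otimes I)\ket{\eta^{(C1)}_{perf}}\|^2$; note that $M_{mix}^{(0,\kappa)}$ is supported inside $\HH_{open2}$ tensored with the range of $\tilde{M}_\kappa$ and that $U_2^{(0)}$ fixes $\HH_{commit}$, so that in each summand the $\HH_{commit}$-content of $\ket{\eta^{(C1)}_{perf}}$ is probed only through the fixed geometric overlaps $\measure{\phi_{\sigma,0}^{(\pi')}}{\phi_{\tau,0}^{(\kappa)}}$ and $\measure{\phi_{\sigma,1}^{(\pi)}}{\phi_{\tau,0}^{(\kappa)}}$ of Lemma \ref{base-phi-1}(4)--(5) (value $1$ when $\kappa=\pi'$, value $\pm\frac12$ when $\kappa\ne\pi'$); and finally use the rewriting of Lemma \ref{Phi-rewritten} to see that summing over all $|K_n|$ values of $\kappa$ — one ``full'' overlap and $|K_n|-1$ ``half'' overlaps — reproduces exactly the factor $\frac12\bigl(\frac{|K_n|+1}{|K_n|-1}\bigr)^2=\omega_n^2|S_n|$, i.e.\ the reciprocal of the efficiency $\frac{1}{\omega_n^2|S_n|}$ that $\AAA_{HPSP}$ already paid in step (M4) when it swapped the commit register for $\rho_0^{(\pi')}$. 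Consequently the single term measured by (M6) on the left accounts for the entire $\kappa$-sum on the right, and the inequality, hence the claim, follows.

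The main obstacle is precisely this last bookkeeping: one has to track how the non-local projection $I\otimes\tilde{M}_{\pi'}$ interacts with the opening measurement $M_0$ through the adversary's unknown unitary $U_2^{(0)}$ (which acts on $\HH_{open2}$ but not on $\HH_{commit}$), and to verify that the overlap combinatorics of Lemma \ref{base-phi-1} contributes exactly the constant $2\bigl(1-\frac{2}{|K_n|+1}\bigr)^2$ and nothing worse — in particular that no spurious extra factor of $1/|K_n|$ appears, since $|K_n|$ is super-polynomial in $n$ and such a factor would be fatal to the polynomial bound needed afterwards in Lemma \ref{HPSP-S0-S1-bound}.
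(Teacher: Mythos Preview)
Your tracking of Steps (M1)--(M4), the identity $2\bigl(1-\frac{2}{|K_n|+1}\bigr)^2=\frac{1}{\omega_n^2|S_n|}$, and the intermediate bound
\[
p_{\pi'}\;\ge\;\frac{T_1(n)}{\omega_n^2|S_n|}\,\bigl\|(M_{bit}^{(0)}\otimes M_{open1}^{(\pi')}\otimes I)(U_2^{(0)}\otimes I)(I\otimes\tilde M_{\pi'})\ket{\eta^{(C1)}_{perf}}\bigr\|^2
\]
are all in line with the paper. The divergence is only in the very last step, and there the paper's argument is far shorter than the combinatorial program you outline.

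The paper never attempts your ``operator inequality.'' Instead it asserts the operator identity
\[
I\otimes M_0\;=\;(I\otimes M_0)(I\otimes\tilde M_{\pi'})
\]
on $\HH_{bit}\otimes\HH_{open}\otimes\HH_{commit}$, and combines it with the commutation $[\,U_2^{(0)}\otimes I,\,I\otimes\tilde M_{\pi'}\,]=0$ that you already noted, obtaining directly
\[
(I\otimes M_0)(U_2^{(0)}\otimes I)\ket{\eta^{(C1)}_{perf}}\;=\;(I\otimes M_0)(U_2^{(0)}\otimes I)\ket{\eta^{(\pi')}}.
\]
After that, one substitutes $\ket{\eta^{(\pi')}}=\omega_n\sqrt{|S_n|}\sqrt{1-\|\ket{\xi_{1,\pi'}}\|^2}\,\ket{\eta^{(\pi')}_{norm}}$ and $(I\otimes M_1)\ket{\eta^{(C1)}}=\sqrt{T_1(n)}\ket{\eta^{(C1)}_{perf}}$, and the factors of $\omega_n^2|S_n|$ cancel. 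No decomposition of $M_0$ into $\kappa$-summands, no invocation of Lemma~\ref{base-phi-1}(4)--(5), and no second use of Lemma~\ref{Phi-rewritten} enters the paper's proof of this claim.

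Your plan, by contrast, sets up an inequality whose left side carries a \emph{single} projector $M_{open1}^{(\pi')}$ while the right side sums the orthogonal pieces $M_{bit}^{(0)}\otimes M_{open1}^{(\kappa)}\otimes M_{mix}^{(0,\kappa)}$ over \emph{all} $\kappa\in K_n$. That is the wrong direction for a naive projector comparison, and the overlap-counting sketch you give does not explain how the adversary's arbitrary unitary $U_2^{(0)}$---which acts nontrivially on $\HH_{open2}$ and hence on the domain of every $M_{mix}^{(0,\kappa)}$---would respect the $\pm\tfrac12$ pattern you want to exploit on $\HH_{commit}$ alone. The paper sidesteps this difficulty entirely by pushing $\tilde M_{\pi'}$ through $M_0$ on the \emph{right}-hand quantity, rather than trying to dominate $M_0$ from the measurement actually performed in Step~(M6).
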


\begin{proof}
Recall that  $\tilde{M}_{\pi'} \equiv \sum_{\sigma\in S_n^{(\pi')}} M_{commit}^{(0,\sigma,\pi')}$ and $M_a \equiv \sum_{\pi\in K_n} M_{bit}^{(a)}\otimes M_{open1}^{(\pi)}\otimes M_{mix}^{(a,\pi)}$ for each index $a\in\{0,1\}$.
Since two operators $I\otimes \tilde{M}_{\pi'}$ and $U_2^{(0)}\otimes I$ are commutable, it follows that
\[
(I\otimes \tilde{M}_{\pi'})(U_2^{(0)}\otimes I)\ket{\eta^{(C1)}_{perf}} = (U_2^{(0)}\otimes I) (I\otimes \tilde{M}_{\pi'})\ket{\eta^{(C1)}_{perf}} = (U_2^{(0)}\otimes I)\ket{\eta^{(\pi')}}.
\]
Since $I\otimes M_0 = (I\otimes M_0)(I\otimes \tilde{M}_{\pi'})$, we obtain
\[
(I\otimes M_0)(U_2^{(0)}\otimes I)\ket{\eta^{(C1)}_{perf}} = (I\otimes M_0)(I\otimes \tilde{M}_{\pi'})(U_2^{(0)}\otimes I)\ket{\eta^{(C1)}_{perf}}
= (I\otimes M_0)(U_2^{(0)}\otimes I)\ket{\eta^{(\pi')}}.
\]

Through Steps (M1)--(M4), we generates $\sqrt{T_1(n)(1-\|\ket{\xi_{1,\pi'}}\|^2)} \ket{\eta^{(\pi')}_{norm}}$.
From Steps (M5)--(M6), the success probability $p_{\pi'}$ for a fixed $\pi'$ is exactly $\| \sqrt{T_1(n)(1-\|\ket{\xi_{1,\pi'}})\|^2)} (I\otimes M_0) (U_2^{(0)}\otimes I) \ket{\eta^{(\pi')}_{norm}} \|^2$.
By the proof of Lemma \ref{norm-of-eta-pi}, it holds that
$\ket{\eta^{(\pi')}_{norm}} = \frac{\sqrt{2}(|K_n|-1)}{\sqrt{1-\|\ket{\xi_{1,\pi'}}\|^2}(|K_n+1)} \ket{\eta^{(\pi')}}$. Thus, $p_{\pi'}$ equals $\| \frac{\sqrt{2T_1(n)}(|K_n|-1)}{|K_n|+1} (I\otimes M_0) (U_2^{(0)}\otimes I) \ket{\eta^{(C1)}_{perf}} \|^2$, which is $\frac{2(|K_n|-1)^2}{(|K_n|+1)^2} \| (I\otimes M_0) (U_2^{(0)}\otimes I) (I\otimes M_1) \ket{\eta^{(C1)}} \|^2$ since $\sqrt{T_1(n)}\ket{\eta^{(C1)}_{perf}} = (I\otimes M_1)\ket{\eta^{(C1)}}$.

\ignore{
Hence, the average success probability $p = \frac{1}{|K_n|}\sum_{\pi'}p_{\pi'}$ is $\frac{1}{|K_n|} \sum_{\pi'\in K_n} \frac{2(|K_n|-1)^2}{(|K_n|+1)^2} \| (I\otimes M_{0}) (U_2^{(0)}\otimes I)(I\otimes M_{1}) \ket{\eta^{(C1)}} \|^2$. Clearly, this value equals
$\frac{2(|K_n|-1)^2}{(|K_n|+1)^2} \| (I\otimes M_0)(U_2^{(0)}\otimes I)(I\otimes  M_1) \ket{\eta^{(C1)}} \|^2$.
}
\end{proof}

\begin{claim}\label{norm-bound}
$\| (I\otimes M_0)(U_2^{(0)}\otimes I)(I\otimes M_1)\ket{\eta^{(C1)}} \|^2 \geq 1/4p(n)^2$.
\end{claim}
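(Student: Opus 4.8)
The plan is to tie the three-operator amplitude in the claim to the two unveiling probabilities $T_0^{(\UU)}(n)$ and $T_1^{(\UU)}(n)$ through a single application of the triangle inequality. Write $\ket{\eta}=\ket{\eta^{(C1)}}$; this is a unit vector because $U_1$ is unitary and is applied to $\ket{0}$. Abbreviate $P_s=I\otimes M_s$ for $s\in\{0,1\}$ and $V=U_2^{(0)}\otimes I$. Since $M_s=\sum_{\pi\in K_n}M_{bit}^{(s)}\otimes M_{open1}^{(\pi)}\otimes M_{mix}^{(s,\pi)}$ is a sum of rank-one projections onto pairwise orthogonal states (distinct $\pi$ give orthogonal contents of $\HH_{open1}$, and the $\HH_{bit}$-factors $M_{bit}^{(0)}$, $M_{bit}^{(1)}$ are orthogonal), each $P_s$ is an orthogonal projection, so $\|P_s\|\le1$ and $P_1^2=P_1$; and $V$ is unitary. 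Recall from Section \ref{sec:binding-condition} that, for the normalized strategy $\UU=(U_1,U_2^{(0)},I)$, one has $T_1^{(\UU)}(n)=\|P_1\ket{\eta}\|^2$ and $T_0^{(\UU)}(n)=\|P_0V\ket{\eta}\|^2$, and that both of these lie in $[0,1]$ (again because the $P_s$ are projections).

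I would decompose $\ket{\eta}=P_1\ket{\eta}+(I-P_1)\ket{\eta}$, apply $V$, then $P_0$, to obtain $P_0VP_1\ket{\eta}=P_0V\ket{\eta}-P_0V(I-P_1)\ket{\eta}$. The triangle inequality, together with $\|P_0\|\le1$ and the fact that $V$ preserves norms, gives
\[
\|P_0VP_1\ket{\eta}\|\ \ge\ \|P_0V\ket{\eta}\|-\|(I-P_1)\ket{\eta}\|\ =\ \sqrt{T_0^{(\UU)}(n)}-\sqrt{1-T_1^{(\UU)}(n)},
\]
where the last equality uses $\|(I-P_1)\ket{\eta}\|^2=1-\|P_1\ket{\eta}\|^2=1-T_1^{(\UU)}(n)$, valid since $P_1$ is a projection and $\ket{\eta}$ is a unit vector.

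Now I would insert the standing hypothesis $T_0^{(\UU)}(n)+T_1^{(\UU)}(n)\ge1+1/p(n)$, which is exactly $T_0^{(\UU)}(n)-\bigl(1-T_1^{(\UU)}(n)\bigr)\ge1/p(n)$. Setting $a=T_0^{(\UU)}(n)$ and $b=1-T_1^{(\UU)}(n)$, we have $1\ge a\ge b\ge0$ and $a-b\ge1/p(n)>0$ (so $a>0$), whence $\sqrt{a}-\sqrt{b}=(a-b)/(\sqrt{a}+\sqrt{b})\ge(a-b)/2\ge1/(2p(n))$, using $\sqrt a+\sqrt b\le2$. Combining with the displayed bound yields $\|(I\otimes M_0)(U_2^{(0)}\otimes I)(I\otimes M_1)\ket{\eta^{(C1)}}\|\ge1/(2p(n))$, and squaring proves the claim.

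I do not anticipate a genuine obstacle here; the two points needing a line of care are that $M_0,M_1$ are honest orthogonal projections (so the norm of $P_0V(I-P_1)\ket{\eta}$ is controlled by $\|(I-P_1)\ket{\eta}\|$ and the Pythagorean identity applies) and that one should use the sharper scalar estimate $\sqrt a-\sqrt b\ge(a-b)/2$ rather than a cruder bound, so that the additive gap $1/p(n)$ between the two probabilities survives, up to a constant factor, as a gap in amplitude. Once Claim \ref{norm-bound} is in hand, plugging it into Claim \ref{HPSP-success-prob} gives $p_{\pi'}\ge2\bigl(1-\tfrac{2}{|K_n|+1}\bigr)^2\cdot\tfrac1{4p(n)^2}$ for every $\pi'\in K_n$, which exceeds $\tfrac1{8p(n)^2}$ for all sufficiently large $n$ since $|K_n|\to\infty$, completing the proof of Lemma \ref{HPSP-S0-S1-bound}.
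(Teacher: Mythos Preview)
Your proof is correct and follows essentially the same route as the paper: the paper also reduces to the inequality $\|P_0VP_1\ket{\eta}\|\ge\sqrt{T_0}-\sqrt{1-T_1}$ (citing \cite[Lemma~4]{DMS00} where you spell out the triangle-inequality argument) and then bounds $\sqrt{T_0}-\sqrt{1-T_1}\ge \varepsilon/2$ using the hypothesis $T_0+T_1\ge 1+\varepsilon$. Your scalar estimate via $\sqrt a-\sqrt b=(a-b)/(\sqrt a+\sqrt b)\ge(a-b)/2$ is a minor variant of the paper's use of $\sqrt{1-x}\le 1-x/2$, yielding the same constant.
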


\begin{proof}
For simplicity, we write $\varepsilon$ for $1/p(n)$. It holds that
$\| (I\otimes M_0)(U_2^{(0)}\otimes I)(I\otimes M_1)\ket{\eta^{(C1)}} \|^2 \geq (\sqrt{T_0} - \sqrt{1-T_1})^2$ by an argument similar to \cite[Lemma 4]{DMS00}. Since $T_0 -\varepsilon \geq 1 - T_1$, it follows that $\sqrt{T_0} - \sqrt{1-T_1}\geq \sqrt{T_0} - \sqrt{T_0-\varepsilon} = \sqrt{T_0}(1-\sqrt{1-\frac{\varepsilon}{T_0}})$. Since $\sqrt{1-x}\leq 1-\frac{x}{2}$ for any real number $x\leq 1$, $\sqrt{T_0}(1-\sqrt{1-\frac{\varepsilon}{T_0}}) \geq \sqrt{T_0}\cdot \frac{\varepsilon}{2T_0} = \frac{\varepsilon}{2\sqrt{T_0}} \geq \frac{\varepsilon}{2}$ since $T_0\leq 1$. Hence, we conclude that
$(\sqrt{T_0} - \sqrt{1-T_1} )^2 \geq \frac{\varepsilon^2}{4}$, as requested.
\end{proof}

By combining Claims \ref{HPSP-success-prob} and \ref{norm-bound}, we obtain the desired consequence that the success probability of obtaining $\pi'$ from $\rho_{0}^{(\pi')}$ by running $\AAA_{HPSP}$ is at least $2(1-\frac{2}{|K_n|+1})^2\cdot \frac{1}{4p(n)^2} \geq 2\cdot (\frac{1}{2})^2\cdot  \frac{1}{4p(n)^2} \geq \frac{1}{8p(n)^2}$ for any number  $n \geq 3$.

%%%%%%%
\section{A Brief Discussion on Our Protocol}

Through Sections \ref{sec:analysis-protocol}--\ref{sec:algorithm-HPSP}, we have shown that our quantum bit commitment scheme achieves computational concealing and statistical binding at communication cost of $O(n\log{n})$, where $n$ is a security parameter, since Alice sends the information on $(a,\sigma,\pi,\ket{\phi_{\sigma,a}^{(\pi)}})$ to Bob during the two phases and the permutations $\sigma$ and $\pi$ are expressed using $O(n\log{n})$ bits. Although our protocol requires a weaker assumption than that of \cite{DMS00}, the communication cost is larger. For a practical application, it is better to reduce the communication cost as in the case of, for example,  \cite{Tan03}.

%%%%%%
\ms
\paragraph{Acknowledgements.}
The author is grateful to Paulo Mateus for inviting him to Departamento de Matem{\'a}tica do Instituto Superior T{\'e}cnico da Universidade de Lisboa between March 12--21, 2013, where a discussion with Andr{\'e} Souto, Paulo Mateus, and Pedro Ad{\~a}o inspired this work. He also thanks Marcos Villagra for finding reference \cite{CKR11}.

%%%%%%%%%%%%%%%%%%%%%%%%%%
%%%%%%%%%%%%%%%%%%%%%%%%%%
%\end{document}
%%%%%%%%%%%%%%%%%%%%%%%%%%
%%%%%%%%%%%%%%%%%%%%%%%%%%
\let\oldbibliography\thebibliography
\renewcommand{\thebibliography}[1]{%
  \oldbibliography{#1}%
  \setlength{\itemsep}{0pt}%
}
\bibliographystyle{plain}

%%%%%%%%%%%%%%%%%%
%%%%%%%%%%%%%%%%%%%%%%%%%%%%%%%%%%%%%%%%%%%%%%%%%%
%%%%%%%%%%%%%%%%%%%%%%%%%%%%%%%%%%%%%%%%%%%%%%%%%%%
\end{document}